\newtheorem{theorem}{Theorem}
\newtheorem{remark}{Remark}
\newtheorem{lemma}{Lemma}
\newtheorem{proposition}{Proposition}
\DeclareMathOperator*{\argmax}{arg\,max}
\newcommand{\card}[1]{\lvert#1\rvert}
\newcommand{\setMinus}[2]{#1\setminus#2}
\newcommand{\size}[1]{\lvert#1\rvert}
\newcommand{\condi}[0]{\mid}
\newcommand{\staOpt}[1]{\mathnormal{#1}}
\newcommand{\init}[0]{\staOpt{o}}
\newcommand{\stateset}[0]{\staOpt{Q}}
\newcommand{\prematchset}[0]{\staOpt{F}}
\newcommand{\cState}[0]{q}
\newcommand{\cPos}[0]{p}
\newcommand{\nextmm}[0]{\boldsymbol\alpha}
\newcommand{\trans}[0]{\boldsymbol\delta}
\newcommand{\shift}[0]{\boldsymbol\gamma}
\newcommand{\alp}[0]{\mathcal{A}}
\newcommand{\N}[0]{\mathrm{N}}
\newcommand{\piid}[0]{\pi}
\newcommand{\powerset}[1]{\mathcal{P}(#1)}
\newcommand{\strat}[0]{S}
\newcommand{\defi}[1]{\textit{#1}}
\newcommand{\matchine}[0]{\Gamma}
\newcommand{\prob}[0]{p}
\newcommand{\model}[0]{\mathcal{M}}
\newcommand{\okw}[1][w]{$w$-consistent}
\newcommand{\tac}[1]{a_{#1}}
\newcommand{\as}[2]{\mathrm{A\!S}_{#1}(#2)}
\newcommand{\alg}[0]{\mathbf{A}}
\newcommand{\proba}[1]{\mathrm{p}_{#1}}
\newcommand{\Minit}[1][M]{{\pi}_{#1}}
\newcommand{\Mtrans}[1][M]{{\delta}_{#1}}
\newcommand{\ordmatchine}[1][\matchine]{O_{#1}}
\newcommand{\readset}[1]{R_{#1}}
\newcommand{\sink}{\odot}
\newcommand{\expan}[1]{#1^{\star}}
\newcommand{\mem}[2][\matchine]{\mathbf{h}_{#1}(#2)}
\newcommand{\kshifted}[2]{\stackrel{\scriptscriptstyle#1}{\overleftarrow{#2}}}
\newcommand{\typa}[1]{\dot{#1}}
\newcommand{\sta}[1][q]{\typa{#1}}
\newcommand{\set}[1]{\mathcal{#1}}
\newcommand{\frst}[0]{\mathbf{f}}
\newcommand{\stl}[0]{s}
\newcommand{\lattice}[1][w]{L^{[#1]}}
\newcommand{\stateLattice}[1][w]{Q^{[#1]}}
\newcommand{\transLattice}[2][w]{\delta^{[#1]}_{#2}}
\newcommand{\shiftLattice}[2][w]{\gamma^{[#1]}_{#2}}
\newcommand{\shiftSet}[2]{\Delta({#1},{#2})}
\newcommand{\symdif}{\mathop{\ominus}}
\newcommand{\leql}{\mathop{\preccurlyeq}}
\newcommand{\lsl}{\mathop{\prec}}
\newcommand{\precid}[1][w]{\mathrm{prec}_{#1}}
\newcommand{\precocc}[3][w]{\mathrm{prec}_{#1}[#2,#3]}
\newcommand{\nul}{\mathrm{NULL}}
\newcommand{\order}[1]{O(#1)}
\newcommand{\mapnext}[0]{\phi}
\newcommand{\complementary}[1]{\overline{#1}}
\newcommand{\nAlgos}[0]{9}
\newcommand{\expstate}[1]{E(#1)}
\title{Designing optimal- and fast-on-average pattern matching algorithms}
\author{Gilles Didier and Laurent Tichit\\
\small Aix-Marseille Universit\'e, CNRS, Centrale Marseille, I2M UMR7373, Marseille, France\\ 
\small E-mail: \url{{gilles.didier, laurent.tichit}@univ-amu.fr}}
\begin{document}
\maketitle
\begin{abstract}
Given a pattern $w$ and a text $t$, the speed of a pattern matching algorithm over $t$ with regard to $w$, is the ratio of the length of $t$ to the number of text accesses performed to search $w$ into $t$.
We first propose a general method for computing the limit of the expected speed of pattern matching algorithms, with regard to $w$, over iid texts. Next, we show how to determine the greatest speed which can be achieved among a large class of algorithms, altogether with an algorithm running this speed. Since the complexity of this determination makes it impossible to deal with patterns of length greater than 4, we propose a polynomial heuristic. Finally, our approaches are compared with \nAlgos{} pre-existing pattern matching algorithms from both a theoretical and a practical point of view, i.e.  both in terms of limit expected speed on iid texts, and in terms of observed average speed on real data. In all cases, the pre-existing algorithms are outperformed.
\end{abstract}
\section{Introduction}

We focus on algorithms solving the online string matching problem, which consists in reporting all, and only the occurrence positions of a pattern $w$ in a text $t$ (\defi{online} meaning that no pre-processing of the text is allowed).
As one of the oldest problems addressed in computer science, it has been extensively studied. We refer to \cite{Faro2013} for a comprehensive list and an evaluation of all the pattern matching algorithms developed so far. By the authors' count, more than 80 algorithms have already been proposed, among which more than a half were published during the last ten years. This fact sounds quite paradoxical, since the Morris-Pratt algorithm, which is optimal in terms of worst case analysis, dates back to 1970.

A possible explanation is that there is wide gap between the worst case complexity of algorithms and their computation times on real data. For instance, there are pattern matching algorithms with non-linear worst case complexities, which perform much better than Morris-Pratt on English texts. Basically, the average case analysis is way more suited to assess the relevance of a pattern matching algorithm from a practical point of view. The average case analysis of some pattern matching algorithms, notably Boyer-Moore-Horspool and Knuth-Morris-Pratt, has already been carried out from various points of view \cite{Yao1979,Guibas1981,Barth1984,BaezaYates1992,Mahmoud1997,Regnier1998,Smythe2001,Tsai2006}. We provide here a general method for studying the limit average behavior of a pattern algorithm over iid texts. More precisely, following \cite{Marschall2008}, we consider the limit expectation of the ratio of the text length to the number of text accesses performed by an algorithm for searching a pattern $w$ in iid texts. This limit expectation is called the asymptotic speed of the algorithm with regard to $w$ under the iid model. The computation of the asymptotic speed is based on  $w$-matching machines which are automata-like structures able to simulate the behavior of a pattern matching algorithm while searching the pattern $w$. The underlying idea is the same as in \cite{Marschall2008,Marschall2010,Marschall2011,Marschall2012} and can be seen as a generalization of the string matching automaton \cite{Cormen1990}.

In the companion paper, G. Didier provided a theoretical analysis of the asymptotic speed of pattern matching algorithms over iid texts \cite{DidierX}.
In particular, he showed that, for a given pattern $w$, the greatest asymptotic speed among a large class of pattern matching algorithms, is achieved by a $w$-matching machine in which the states are essentially subsets of positions of $w$. Such machines are called \defi{strategies} below.

We provide here a brute force algorithm computing the \defi{Fastest} strategy for a given pattern $w$ and the frequencies of an iid model.
The algorithm is based on an original structure associated to the pattern $w$ and called its position lattice, which gives a full representation of the overlap relations between the subsets of positions of $w$.

Since the brute force algorithm cannot be applied on patterns of length greater than $4$, because of its (very high) time-complexity, we propose a polynomial \defi{$K$-Heuristic}, in which the polynomial order $K$ may be chosen by the user.

The Fastest and $K$-Heuristic approaches are finally compared with \nAlgos{} several pre-existing pattern matching algorithms:
\begin{itemize}
 \item from a theoretical point of view, by computing their limit expected speeds with regard to various patterns and iid models,
 \item from a practical point of view, by computing their average speeds over two sources (an English text and a DNA sequence).
\end{itemize}

In both cases, the Fastest and $K$-Heuristic (with $K$ large enough) approaches outperform the pre-existing algorithms.

The software and the data used to perform the tests are available at \url{https://github.com/gilles-didier/Matchines.git}.

The rest of the paper is organized as follows. Section \ref{secNotation} presents the notations and recalls some concepts and results from \cite{DidierX}. It is followed by two sections which introduce the central objects of this work: the strategies and the position lattice of a pattern. In particular, we provide an algorithm computing the position lattice of a given pattern. Section \ref{secBrute} shows how to use the position lattice of a pattern to obtain the Fastest strategy with regard to this pattern and an iid model. In Section  \ref{secHeuristic}, we provide a polynomial heuristic allowing to compute fast strategies. Section \ref{secEvaluation} presents the results of various comparisons between \nAlgos{} pre-existing pattern matching algorithms, the $K$-Heuristic and, each time it is possible, the Fastest strategy. The results are discussed in the last section.

\section{Notations and definitions}\label{secNotation}
\subsection{Notations and general definition}
For all finite sets $\set{S}$, $\powerset{\set{S}}$ is the power set of $\set{S}$ and $\card{\set{S}}$ is its cardinal.
An \defi{alphabet} is a finite set $\alp$ of elements called \defi{letters} or \defi{symbols}.

A \defi{word}, a \defi{text} or a \defi{pattern} on $\alp$ is a finite sequence of symbols of $\alp$.
We put $\size{v}$ for the length of a word $v$.
Words are indexed from $0$, i.e. $v = v_{0}v_{1}\ldots v_{\size{v}-1}$. We write $v_{[i,j]}$ for the subword of $v$ starting at its position $i$ and ending at its position $j$, i.e. $v_{[i,j]} = v_{i}v_{i+1}\ldots v_{j}$. The \defi{concatenate} of two words $u$ and $v$ is the word $uv=u_{0}u_{1}\ldots u_{\size{u}-1}v_{0}v_{1}\ldots v_{\size{v}-1}$.

For any length $n\geq 0$, we note $\alp^{n}$ the set of words of length $n$ on $\alp$, and $\alp^{\star}$, the set of finite words on $\alp$, i.e. $\alp^{\star} = \bigcup_{n=0}^{\infty} \alp^{n}$.

Unless otherwise specified, all the texts and patterns considered below are on a fixed alphabet $\alp$.

A \defi{pattern matching algorithm} takes a pattern $w$ and a text $t$ as inputs an reports all, and only the occurrence positions of $w$ in $t$.
For all patterns $w$, we say that two pattern matching algorithms are \defi{$w$-equivalent} if, for all texts $t$, they access exactly the same positions of $t$ on the input $(w,t)$.

\subsection{Matching machines and the generic algorithm \cite{DidierX}}
For all patterns $w$, a \defi{$w$-matching machine} is $6$-uple $(\stateset, \init, \prematchset, \nextmm, \trans, \shift)$ where

\begin{itemize}
\item $\stateset$ is a finite set of states,
\item $\init\in\stateset$ is the initial state,
\item $\prematchset\subset\stateset$ is the subset of pre-match states,
\item $\nextmm:\stateset\rightarrow \mathbb{N}$ is the next-position-to-check function, which is such that for all $q\in\prematchset$, $\nextmm(q)<\size{w}$,
\item $\trans:\stateset\times\alp\rightarrow \stateset$ is the transition state function,
\item $\shift:\stateset\times\alp\rightarrow \mathbb{N}$ is the shift function.
\end{itemize}

By convention, the set of states of a matching machine always contains a \defi{sink state} $\sink$, which is such that, for all symbols $x\in\alp$,  $\trans(\sink, x) = \sink$ and $\shift(\sink, x) = 0$.

The \defi{order} $\ordmatchine$ of a matching machine $\matchine=(\stateset, \init, \prematchset, \nextmm, \trans, \shift)$ is defined as $\ordmatchine = \max_{q\in\stateset}\{\nextmm(q)\}$.

The $w$-matching machines carry the same information as the \defi{Deterministic Arithmetic Automatons} defined in \cite{Marschall2010,Marschall2011}.

The generic algorithm takes a $w$-matching machine and a text $t$ as inputs and outputs  positions of $t$ (Algorithm \ref{algogen}).

\begin{algorithm}[htp]
\SetKw{and}{and}
\SetKw{print}{print}
\SetKwInOut{Input}{input}
\SetKwInOut{Output}{output}
	\SetAlgoLined\DontPrintSemicolon
	\Input{a $w$-matching machine $(\stateset, \init, \prematchset, \nextmm, \trans, \shift)$ and a text $t$}
	\Output{all the occurrence positions of $w$ in $t$}
	\vskip 0.2cm	
		$(\cState, \cPos) \leftarrow (\init, 0)$\;
		\While{$\cPos\leq\size{t}-\size{w}$} {
			\If{$\cState\in\prematchset$ \and $t_{\cPos+\nextmm(\cState)} = w_{\nextmm(\cState)}$\nllabel{liTestGA}}{
				\print ``~occurrence at position $\cPos$~''\;
			}
			$(\cState, \cPos) \leftarrow (\trans(\cState, t_{\cPos+\nextmm(\cState)}), \cPos+\shift(\cState, t_{\cPos+\nextmm(\cState)}))$\;
		}
		\vskip 0.3cm
\caption{The generic algorithm}\label{algogen}
\end{algorithm} 

Each component of a $w$-matching machine makes sense in regard to the way it is used by the generic algorithm. The pre-match states in $\prematchset$ are those which lead to report an occurrence of the pattern at the current position, if the next-position-to-check of the pattern matches the corresponding position in the text (Line \ref{liTestGA} of Algorithm \ref{algogen}). The condition  $\nextmm(q)<\size{w}$ for all  $q\in\prematchset$ in the definition of $w$-matching machines, is technical and used in \cite{DidierX}.

A $w$-matching machine $\matchine$ is \defi{valid} if, for all texts $t$, the execution of the generic algorithm on the input $(\matchine, t)$ outputs all, and only the occurrence positions of $w$ in $t$. Since one has to check all the positions of the pattern $w$ before concluding that it occurs somewhere in a text, the order of a valid $w$-matching machine is at least $\size{w}-1$.

We claim that for all the pattern matching algorithms developed so far and all patterns $w$, there exists a $w$-matching machine $\matchine$ which is such that, for all texts $t$, the generic algorithm and the pattern matching algorithm access exactly the same positions of $t$ on the inputs $(\matchine, t)$ and $(w, t)$ respectively \cite{DidierX}.
For instance, Figure \ref{figMatchineBas} displays a $abb$-matching machine which accesses the same positions as the naive algorithm while searching $abb$.

\begin{figure*}[!tpb]
\centering{\includegraphics[width=0.75\textwidth, trim=0cm -0.0cm 0cm 0cm]{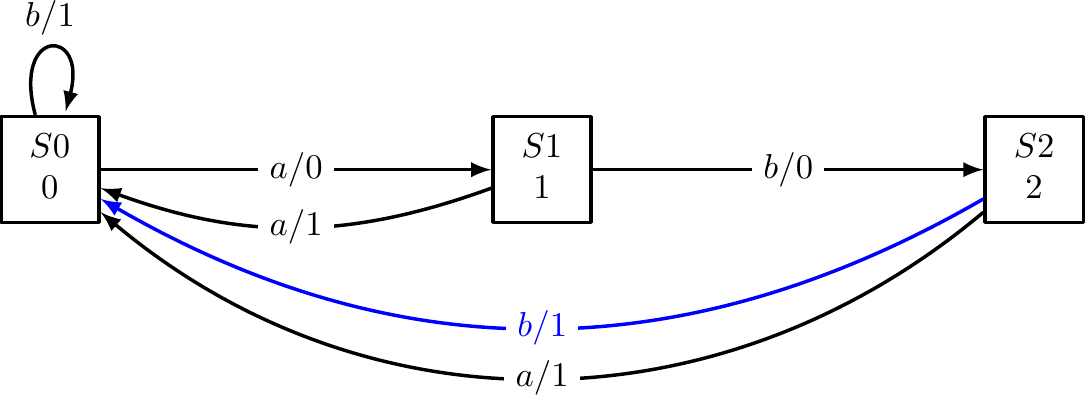}}
\caption{$abb$-matching machine of the naive algorithm. The next-position-to check are displayed below all states $S0$, $S1$ and $S2$. Edges from states $Si$ are labelled with $``x/\shift(Si, x)"$ for all symbols $x$. The  transition associated to a match is blue-colored.}
\label{figMatchineBas}
\end{figure*}

\subsection{Full-memory expansion -- standard matching machines \cite{DidierX}}\label{secFull}

We present here a transformation on matching machines which split their states according to the text positions read from the current position during an execution of the generic algorithm. The main point of this transformation is that the average complexity of matching machines such obtained may then be computed through algebraic methods (Sections \ref{secModels} and \ref{secSpeed}).

For all $n\in\N$, $\readset{n}$ is the set of subsets $H$ of  $\{0,\ldots, n\}\times\alp$ verifying that, for all $i\in\{0, \ldots, n\}$, there exists at most one pair in $H$ with $i$ as first entry.

For all $H\in\readset{n}$, we put $\frst(H)$ for the set comprising all the first entries of the pairs in $H$, namely  
\begin{dmath*}\frst(H)=\{i\hiderel{\condi} \exists x\hiderel{\in}\alp\mbox{ with }(i,x)\hiderel{\in} H\}.\end{dmath*} 

For all $k\in\N$ and $H\in\readset{n}$, the \defi{$k$-shifted} of $H$ is 
\begin{dmath*}
\kshifted{k}{H} \hiderel{=} \{(u-k,y) \hiderel{\condi}  (u, y)\hiderel{\in} H \mbox{ with } u\hiderel{\geq} k\},\end{dmath*}
i.e. the subset of $\readset{n}$ obtained by subtracting $k$ from the first entries of the pairs in $H$ and by keeping only the pairs with non-negative first entries.

The \defi{full memory expansion} of a $w$-matching machine $\matchine = (\stateset, \init, \prematchset, \nextmm, \trans, \shift)$ is the $w$-matching machine $\expan{\matchine}$ obtained by removing the unreachable states of $\matchine' = (\stateset', \init', \prematchset', \nextmm', \trans', \shift')$, defined as:
\begin{itemize}
\item $\stateset' = \stateset\times\readset{\ordmatchine}$
\item $\init' = (\init, \emptyset)$
\item $\nextmm'((q,H)) = \nextmm(q)$
\item $\shift'((q,H), x) = \shift(q, x)$
\item $\prematchset' = \prematchset\times\readset{\ordmatchine}$
\item \begin{dmath*}\trans'((q,H), x) = \left\{\begin{array}{ll}
(\trans(q,x),  \kshifted{\shift(q, x)}{H\cup\{(\nextmm(q),x)\}}) & \mbox{if $\forall a\in\alp, (\nextmm(q),a)\not\in H$}\\
\sink & \mbox{if $\exists a\neq x$ s.t. $(\nextmm(q),a)\in H$}\\
(\trans(q,x),  \kshifted{\shift(q, x)}{H}) & \mbox{if $(\nextmm(q),x)\in H$}
\end{array}\right.\end{dmath*}
\end{itemize}

\begin{figure*}[!tpb]
\centering{\includegraphics[width=0.75\textwidth, trim=0cm -0.0cm 0cm 0cm]{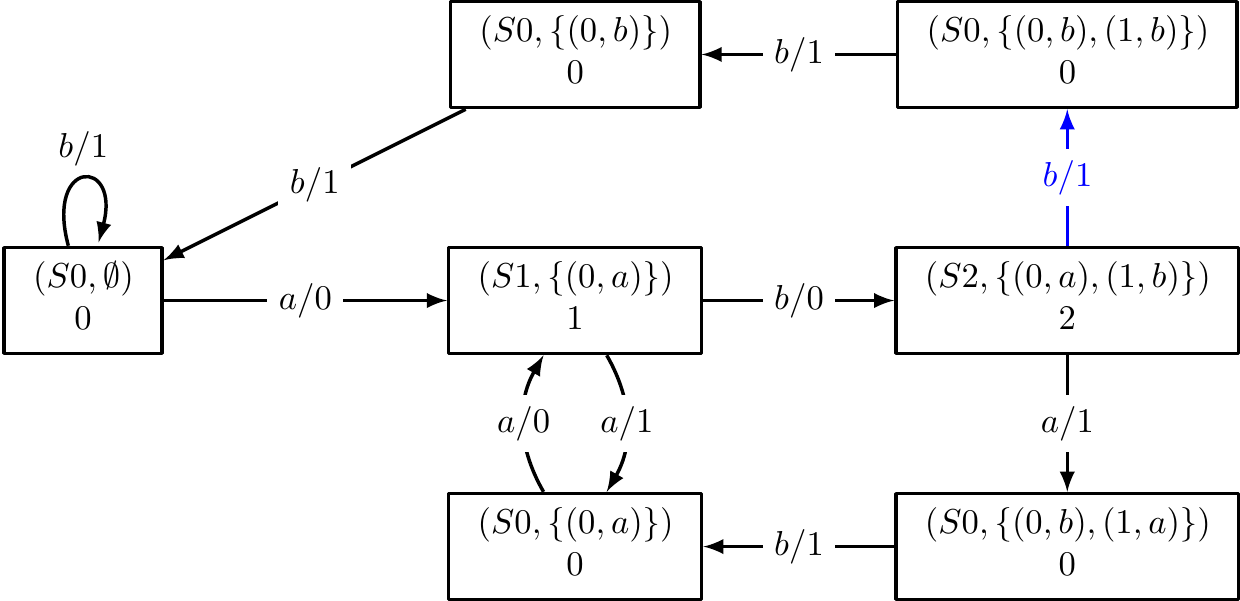}}
\caption{Full memory expansion of the $abb$-matching machine of Figure \ref{figMatchineBas}. 
}
\label{figMatchineExt}
\end{figure*}

By construction, at all iterations of the generic algorithm on the input $(\expan{\matchine}, t)$, if the current state and position are  $(q, H)$ and  $p$, respectively, then the positions of $\{j+p\condi  j\in\frst(H)\}$ are exactly the positions of $t$ greater than  $p$ accessed so far (the second entries of the corresponding elements of $H$ give the symbols read).

For all texts $t$,  the generic algorithm access the same positions of $t$ on the inputs $(\matchine, t)$ and $(\expan{\matchine}, t)$ \cite{DidierX}.

Let us remark that the full memory expansion of  the full memory expansion of a matching machine is equal to its full memory expansion (up to a state isomorphism).
A $w$-matching machine $\matchine$ is \defi{standard} if each state $q$ of $\matchine$ appears in a unique pair/state of its full memory expansion, or, equivalently, if it is equal to its full memory expansion. For instance the $abb$-matching machine of Figure \ref{figMatchineBas} is not standard. Since the matching machine of Figure \ref{figMatchineExt} is a full memory expansion, it is standard. For all states $q$ of a standard matching machine $\matchine$, we put $\mem{q}$ for the second entry of the unique pair/state of $\expan{\matchine}$ in which $q$ appears.

We implemented a basic algorithm computing the full-memory expansion $\expan{\matchine}=(\expan{\stateset}, \expan{\init}, \expan{\prematchset}, \expan{\nextmm}, \expan{\trans}, \expan{\shift})$ of a $w$-matching machine ${\matchine}=({\stateset}, {\init}, {\prematchset}, {\nextmm}, {\trans}, {\shift})$ in  $O(\size{w}.\card{\expan{\stateset}})$ time. We have $\card{\expan{\stateset}}\leq(\alp+1)^{\size{w}}\card{\stateset}$ but the size of $\expan{\stateset}$ may vary a lot with regard to the matching machine/algorithm considered.

A $w$-matching machine $\matchine$ is \defi{compact} if it contains no state $q$ which always leads to the same state. Formally, $\matchine= (\stateset, \init, \prematchset, \nextmm, \trans, \shift)$ is compact if there is no $q\in\stateset$ such that one of the following assertions holds:
\begin{enumerate}
	\item there exists a symbol $x$ with $\trans(\sta,x)\neq\sink$ and $\trans(\sta,y)=\sink$ for all symbols $y\neq x$;
	\item for all symbols $x$ and $y$, we have both $\trans(\sta,x) = \trans(\sta,y)$ and $\shift(\sta,x) = \shift(\sta,y)$.
\end{enumerate}
Basically, a non-compact machine performs useless text accesses. In \cite{DidierX}, it is shown that any $w$-matching machine can be turned into a compact (and faster) machine.

\subsection{iid and Markov models}\label{secModels}

An \defi{independent identically distributed (iid)} model (aka \defi{Bernoulli} model) is fully specified by a probability distribution $\piid$ on the alphabet (i.e. $\piid(x)$ is the probability of the symbol $x$ in the model). Such a model will be simply referred to as ``$\piid$'' below. Under $\piid$, the probability of a text $t$ is 
\begin{dmath*}
\proba{\piid}(t) = \prod_{i=0}^{\size{t}-1} \piid(t_{i}).
\end{dmath*}

A \defi{Markov} model $M$ over a given set of states $Q$ is a $2$-uple $(\Minit, \Mtrans)$, where $\Minit$ is a probability distribution on $Q$ (the initial distribution) and $\Mtrans$ associates a pair of states $(q, q')$ with the probability for $q$ to be followed by $q'$ (the transition probability). Under a Markov model $M = (\Minit, \Mtrans)$, the probability of a sequence $s$ of states  is 
\begin{dmath*}
\proba{M}(s) = \Minit(s_{0}) \prod_{i=0}^{\size{s}-1} \Mtrans(s_{i}, s_{i+1}).
\end{dmath*}

\begin{theorem}[\cite{DidierX}]\label{theoiid}
	Let $\matchine = (\stateset, \init, \prematchset, \nextmm, \trans, \shift)$ be a $w$-matching machine. If a text $t$ follows an iid model and $\matchine$ is standard then the sequence of states parsed by the generic algorithm on the input $(\matchine, t)$ follows a Markov model $(\Minit, \Mtrans)$.
\end{theorem}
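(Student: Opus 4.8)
The plan is to follow the joint sequence of state--position pairs produced by the generic algorithm on $(\matchine,t)$, namely $(q_0,p_0)=(\init,0)$ and, for $n\geq 0$,
\begin{dmath*}
(q_{n+1},p_{n+1})=(\trans(q_n,t_{p_n+\nextmm(q_n)}),\,p_n+\shift(q_n,t_{p_n+\nextmm(q_n)})).
\end{dmath*}
Since $q_{n+1}$ is a fixed function of $q_n$ and of the single symbol $t_{p_n+\nextmm(q_n)}$ consulted at step $n$, proving that $(q_n)_{n\geq 0}$ is Markov reduces to showing that the conditional law of this symbol given the whole past $(q_0,\ldots,q_n)$ depends on $q_n$ alone. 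Because the first parsed state is always $\init$, I would take $\Minit$ to be the point mass at $\init$ on $\stateset$ and then read the transition kernel $\Mtrans$ off the analysis below.

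The standardness hypothesis enters through the identity $\matchine=\expan{\matchine}$: every state $q$ is then a pair whose second component is a memory $\mem{q}\in\readset{\ordmatchine}$, and the invariant recalled before the statement shows that $\mem{q_n}$ records exactly which forward offsets from the current position $p_n$ have already been consulted. Since every read is forward ($\nextmm(q_n)\geq 0$, so $p_n+\nextmm(q_n)\geq p_n$), the position about to be consulted coincides with a previously accessed one if and only if $\nextmm(q_n)\in\frst(\mem{q_n})$; crucially, this freshness test is a function of $q_n$ alone.

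The argument then splits into two cases, both driven only by $q_n$. If $\nextmm(q_n)\in\frst(\mem{q_n})$, there is a unique $x_0$ with $(\nextmm(q_n),x_0)\in\mem{q_n}$; the text being fixed, the algorithm re-reads $x_0$ and $q_{n+1}=\trans(q_n,x_0)$ is deterministic. If instead $\nextmm(q_n)\notin\frst(\mem{q_n})$, the absolute position $p_n+\nextmm(q_n)$ has genuinely never been read, and the decisive observation is that $(q_0,\ldots,q_n)$ is a deterministic function of the symbols at the previously accessed positions only; under the iid model the symbol at a never-read position is independent of those and distributed as $\piid$. In both cases $\p(q_{n+1}=q'\mid q_0,\ldots,q_n)$ equals a quantity $\Mtrans(q_n,q')$ determined by $q_n$: one sets $\Mtrans(q,q')=\piid(\{x\in\alp\mid\trans(q,x)=q'\})$ when $\nextmm(q)\notin\frst(\mem{q})$, and $\Mtrans(q,q')=1$ if $\trans(q,x_0)=q'$ (and $0$ otherwise), with $(\nextmm(q),x_0)\in\mem{q}$, in the complementary case.

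The main obstacle is the independence claim in the fresh case: one must check carefully that the entire parsed history is measurable with respect to the symbols at already-accessed positions, so that an as-yet-unread position is statistically independent of it. This is exactly where standardness is indispensable. In a non-standard machine a single state $q$ can be reached along executions having consulted different position sets, so $\frst(\mem{q})$ would not be well defined from $q$; the freshness of the next position, and hence the conditional law of $q_{n+1}$, could then depend on the trajectory rather than on the current state, and the Markov property would fail.
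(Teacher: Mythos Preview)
Your proposal is correct and follows essentially the same approach as the paper: fix $\Minit$ as the point mass at $\init$, then split the computation of $\Mtrans(q,q')$ according to whether $\nextmm(q)\in\frst(\mem{q})$ (deterministic re-read) or not (fresh iid symbol), exploiting standardness so that this distinction and the memory $\mem{q}$ depend on $q$ alone. The paper's proof is much terser---it simply states $\Minit$ and the two-case formula for $\Mtrans$ and asserts ``independently of the previous states''---whereas you spell out the measurability argument (the parsed history depends only on symbols at already-accessed positions) that actually justifies the independence in the fresh case; this extra care is welcome but not a different route.
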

\begin{proof}
Whatever the text model and the machine, the sequence of states always starts with $\init$ with probability $1$. We have $\Minit(o)=1$ and $\Minit(q)=0$ for all $q\neq o$.

The probability $\Mtrans(q,q')$ that the state $q'$ follows the state $q$ during an execution of the generic algorithm, is equal to:
\begin{itemize}
\item $1$,  if there exists a symbol $x$ such that $\trans(q,x) = q'$ and  $(\nextmm(q), x)\in\mem{q}$, i.e. if the relative position $\nextmm(q)$ was already checked with $x$ occurring at it,
\item $\displaystyle\sum_{\mbox{$x$  s.t. $\trans(q,x) = q'$}} \piid(x)$, otherwise,
\end{itemize}
 independently of the previous states.
\end{proof}

\subsection{Asymptotic speed}\label{secSpeed}

Let $\model$ be a  text model and $\alg$ be an algorithm. The \defi{$w$-asymptotic speed} of $\alg$ under $\model$ is the limit expectation, under $\model$, of the ratio of the text length to the number of text accesses performed by $\alg$ \cite{DidierX}. Namely, by putting $\tac{\alg}(t)$ for the number of text accesses performed by $\alg$ to parse $t$ and $\prob_{\model}(t)$ for the probability of $t$ with regard to $\model$, the asymptotic speed of $\alg$ under $\model$ is
\begin{dmath*}
\as{\model}{\alg} = \lim_{n\rightarrow\infty}\sum_{t\in\alp^{n}} \frac{\size{t}}{\tac{\alg}(t)}\prob_{\model}(t).
\end{dmath*}

The asymptotic speed $\as{\model}{\matchine}$ of a $w$-matching machines $\matchine$ is that the generic algorithm with $\matchine$ as first input.
From Theorem 5 of \cite{DidierX}, the asymptotic speed of a standard $w$-matching machine $\matchine=({\stateset}, {\init}, {\prematchset}, {\nextmm}, {\trans}, {\shift})$ under an iid model $\piid$ exists and is given by
\begin{dmath}\label{eqAsympFreq}
\as{\piid}{\matchine} = \sum_{q\in{\stateset}} \beta_{q}\expstate{q},
\end{dmath}
where $(\beta_{q})_{q\in{\stateset}}$ are the limit frequencies of the states of the Markov model associated to ${\matchine}$ and $\piid$ in Theorem \ref{theoiid}, and 
\begin{dmath*}
\expstate{q} = \left\{\begin{array}{ll}
{\shift}(q,x) & \mbox{if  $(\nextmm(q), x)\in\mem{q}$),}\\
\sum_{x} {\shift}(q,x)\piid_{x} &\mbox{otherwise.}
\end{array}\right.
\end{dmath*}

Computing the asymptotic speed of a pattern matching algorithm, with regard to a pattern $w$ and an iid model $\piid$ is performed by following the stages below.
\begin{enumerate}
\item We get a $w$-matching machine $\matchine$ which simulates the behavior of the algorithm while looking for $w$ (Figure \ref{figMatchineBas}). The transformation of the \nAlgos{} algorithms presented in Section \ref{secEvaluation} (and a few others, see our GitHub repository) into $w$-matching machines, given $w$, has been implemented.
\item We obtain the full-memory expansion $\expan{\matchine}$ of $\matchine$ (Figure \ref{figMatchineExt}, Section \ref{secFull}).
\item We compute the limit frequencies of the Markov model associated to $\expan{\matchine}$ and $\piid$ in Theorem  \ref{theoiid}. This mainly needs to solve a system of linear equations of dimension $\card{\expan{\stateset}}$.
\item We finally obtain the asymptotic speed of the algorithm from these limit frequencies, $\piid$ and $\expan{\matchine}$ by using Equation \ref{eqAsympFreq}.
\end{enumerate} 

The most time-consuming stage is the computation of the limit frequencies, which has $O(\card{\expan{\stateset}}^{3})$ time complexity, where $\card{\expan{\stateset}}$, the number of states of the full memory expansion, is smaller than $(\alp+1)^{\size{w}}\card{\stateset}$.

\section{Strategies}\label{secStrategy}

For all sets $\set{I}\subset\mathbb{N}$ and $k\in\mathbb{N}$, we define the $k$-left-shifted of $\set{I}$ as  
\begin{dmath*}
\shiftSet{\set{I}}{k}\hiderel{=}\{i-k\condi  \exists i\hiderel{\in}\set{I} \mbox{ and } i \hiderel{\geq} k\}.
\end{dmath*}

A \defi{$w$-strategy} $\strat=(\stateset, \init, \prematchset, \nextmm, \trans, \shift)$ is a $w$-matching machine  such that
\begin{itemize}
\item $\stateset\subseteq\setMinus{\powerset{\{0, \ldots, \size{w}-1\}}}{\{\{0, \ldots, \size{w}-1\}\}}$ and $\emptyset\in\stateset$,{}
\item $\init = \emptyset$,{}
\item $\prematchset = \{\stl\in\stateset\condi \card{\stl} = \size{w}-1\}$,
\item $\nextmm:\stateset\rightarrow \{0, 1, \ldots, \size{w}-1\}$ is such that for all $\stl\in\stateset$, $\nextmm(\stl)\not\in \stl$ and $\card{\stl}<\size{w}-1\Rightarrow\stl\cup\{\nextmm(\stl)\}\in\stateset$,
\item $\shift:\stateset\times\alp\rightarrow \{0, 1, \ldots, \size{w}\}$ is such that for all states $\stl$ and all symbols $x$,
\begin{dmath*}\shift(\stl,x)=\left\{\begin{array}{ll}\min\{k\geq 1\condi  w_{\nextmm(\stl)-k} = x \mbox{ if } \nextmm(\stl)\geq k \mbox{ and } w_{j} = w_{j+k} \mbox{ for all } j\in\shiftSet{\stl}{k}\} &\mbox{if } \stl\in\prematchset,\\
\min\{k\geq 0\condi  w_{\nextmm(\stl)-k} = x \mbox{ if } \nextmm(\stl)\geq k \mbox{ and } w_{j} = w_{j+k} \mbox{ for all } j\in\shiftSet{\stl}{k}\} &\mbox{otherwise,}
\end{array}\right.\end{dmath*}
\item $\trans:\stateset\times\alp\rightarrow \stateset$ is such that for all $\stl\in\stateset$ and all symbols $x$,
\begin{dmath*}
	\trans(\stl, x) = \shiftSet{\stl\cup\{\nextmm(\stl)\}}{\shift(\stl, x)}.
\end{dmath*}
\end{itemize}

Figure {\ref{figStrategies}} shows two $abb$-strategies which differ notably in the next-position-to-check of state $\{0\}$.

\begin{figure*}[!tpb]
\centering{\includegraphics{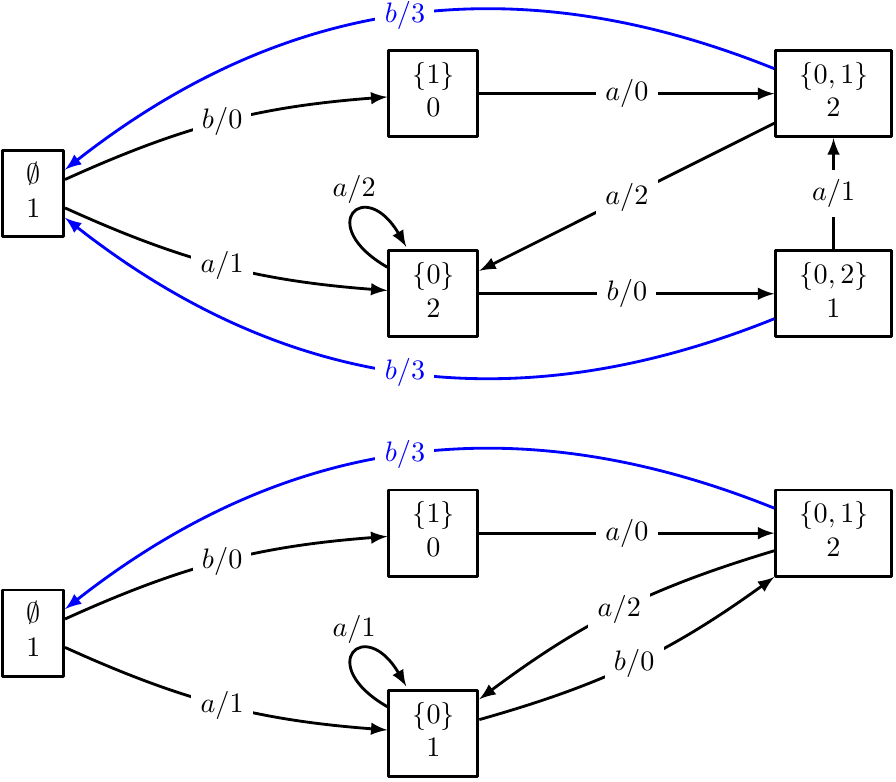}}
\caption{Two $abb$-strategies with the same conventions as in Figure {\ref{figMatchineBas}}.}
\label{figStrategies}
\end{figure*}

\begin{proposition}
A $w$-strategy is a standard, compact, valid and non-redundant $w$-matching machine.
\end{proposition}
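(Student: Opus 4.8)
The plan is to first pin down, for a $w$-strategy $\strat=(\stateset,\init,\prematchset,\nextmm,\trans,\shift)$, the \emph{memory invariant}: whenever the generic algorithm reaches a state $\stl\in\stateset$ at a text position $p$, the set of already-read relative positions is exactly $\stl$ and each such position $j$ carries the symbol $w_j$. I would establish this by induction on the iterations, the base case being that the initial state $\emptyset$ has empty memory. For the inductive step, since $\nextmm(\stl)\notin\stl$, the freshly checked relative position $\nextmm(\stl)$ is never among the remembered ones, so the first branch of the full-memory transition applies and the sink is never entered. Writing $k=\shift(\stl,x)$, the new memory is $\kshifted{k}{(\{(j,w_j):j\in\stl\}\cup\{(\nextmm(\stl),x)\})}$, whose set of first entries is precisely $\shiftSet{\stl\cup\{\nextmm(\stl)\}}{k}=\trans(\stl,x)$; and the two clauses in the definition of $\shift$ --- namely $w_j=w_{j+k}$ on $\shiftSet{\stl}{k}$ and $w_{\nextmm(\stl)-k}=x$ when $\nextmm(\stl)\ge k$ --- are exactly what is needed to check that every surviving pair $(u-k,\cdot)$ again carries $w_{u-k}$. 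Hence the memory of $\trans(\stl,x)$ is $\{(j,w_j):j\in\trans(\stl,x)\}$, closing the induction. In particular each state $\stl$ occurs in a single pair $(\stl,\{(j,w_j):j\in\stl\})$ of $\expan{\strat}$, so $\strat$ equals its full-memory expansion and is \textbf{standard}, with $\mem[\strat]{\stl}=\{(j,w_j):j\in\stl\}$.

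Next I would derive \textbf{validity} from this invariant. Soundness is immediate: an output is produced only from a pre-match state $\stl$ (so $\card{\stl}=\size{w}-1$) when $t_{p+\nextmm(\stl)}=w_{\nextmm(\stl)}$; together with the invariant this means all $\size{w}$ positions of $w$ match $t$ at $p$, so $p$ is a genuine occurrence. For completeness I would argue that no occurrence is ever skipped, using the minimality built into $\shift$. When the algorithm shifts from $p$ by $k=\shift(\stl,x)$, minimality says that for every $0<k'<k$ at least one clause fails at $k'$; unfolding this, some already-known symbol (a remembered $w_j$, or the symbol $x$ just read at $\nextmm(\stl)$) is incompatible with $w$ under the alignment $p+k'$, so $p+k'$ cannot be an occurrence. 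Thus the discarded alignments are provably occurrence-free, while alignments that are kept are eventually tested at their pre-match state. Combined with the fact that each iteration either advances the reading within the current window (shift $0$, only from a non-pre-match state, reading a new position since $\nextmm(\stl)\notin\stl$) or strictly advances $p$, this yields that the algorithm reports exactly the occurrences of $w$, so $\strat$ is valid.

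It remains to prove that $\strat$ is \textbf{compact} and \textbf{non-redundant}. Non-redundancy is a consequence of standardness: the map $\stl\mapsto\mem[\strat]{\stl}$ is injective, so distinct states carry genuinely distinct memories and cannot be merged. For compactness I would rule out the two forbidden patterns separately. Condition~1 never holds because, by the invariant, $\trans(\stl,x)$ is always a subset of positions and never $\sink$, so (using $\card{\alp}\ge 2$) it is impossible for all symbols but one to lead to the sink. Condition~2 --- that $\trans(\stl,\cdot)$ and $\shift(\stl,\cdot)$ are both constant in the read symbol --- is the delicate point, \emph{and I expect it to be the main obstacle}. The clean case is a non-pre-match state: there the matching symbol $w_{\nextmm(\stl)}$ admits the shift $k=0$ (its clauses hold trivially), whereas any other symbol forces $k\ge 1$, so the two shifts already differ. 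For pre-match states all shifts are $\ge 1$, and the argument must instead exhibit two symbols whose minimal admissible shifts differ; this hinges on the existence of some $k\le\nextmm(\stl)$ satisfying the overlap clause $w_j=w_{j+k}$ on $\shiftSet{\stl}{k}$, so that the symbol $w_{\nextmm(\stl)-k}$ is shifted strictly less than symbols matching no such $w_{\nextmm(\stl)-k}$. I would therefore concentrate the remaining work on showing that such a small shift always exists for a pre-match state; the case $\nextmm(\stl)=0$, where no admissible $k\ge 1$ with $k\le\nextmm(\stl)$ is available, is exactly where the read symbol can fail to influence the outcome, and is the crux that any correct proof of compactness must settle.
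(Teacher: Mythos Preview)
Your approach is far more detailed than the paper's, which dispatches the whole proposition in two sentences: standard, compact and non-redundant are asserted to hold ``by construction,'' and validity is deferred to Theorem~1 of the companion paper~\cite{DidierX}. Your memory-invariant argument for standardness and your direct soundness/completeness treatment of validity are sound and make explicit what the paper leaves implicit or outsources.

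The point you single out as the crux --- compactness at a pre-match state $\stl$ with $\nextmm(\stl)=0$ --- is not a defect in your reasoning but a genuine issue with the statement as written. When $\card{\stl}=\size{w}-1$ and the single missing position is $0$, the pre-match shift formula reduces to
\[
\min\{k\ge 1\condi w_j=w_{j+k}\text{ for all }j\in\shiftSet{\stl}{k}\},
\]
which does not involve the symbol $x$ read; hence $\shift(\stl,\cdot)$ and $\trans(\stl,\cdot)$ are both constant in $x$, and condition~2 of non-compactness is satisfied at $\stl$. A concrete witness is the ``right-to-left'' $abb$-strategy (sublattice~(b) of Figure~\ref{figSublattice}): from state $\{1,2\}$ one checks position~$0$, and for either symbol the shift is $3$ and the next state is $\emptyset$. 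The text access there is of course not useless --- it decides whether to report a match --- so the machine behaves sensibly, but it fails the paper's formal compactness test. The paper's ``by construction'' simply does not engage with this case; a correct formulation would exempt pre-match states from condition~2, after which your non-pre-match argument (shift~$0$ on the matching symbol versus shift~$\ge 1$ on any other) already finishes compactness.
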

\begin{proof}
By construction, a $w$-strategy is standard, compact and non-redundant. The validity of a $w$-strategy follows from Theorem 1 of \cite{DidierX}.
\end{proof}

\begin{proposition}\label{propOptim}
There is a $w$-strategy which achieves the greatest asymptotic speed among all the $w$-matching machines of order $\size{w}-1$.
\end{proposition}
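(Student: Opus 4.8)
The plan is to prove the equivalent statement that every \emph{valid} $w$-matching machine $\matchine$ of order $\size{w}-1$ is dominated in speed $\as{\piid}{\matchine}$ by some $w$-strategy. This is the right reduction because only valid machines are genuine competitors (an invalid machine does not solve the matching problem and may shift arbitrarily), and because a valid machine has order at least $\size{w}-1$, so the machines in the class have order exactly $\size{w}-1$. Granting the reduction, note that a $w$-strategy is completely determined by its state set — a family of proper subsets of $\{0,\ldots,\size{w}-1\}$ — together with its next-position-to-check function $\nextmm$, the maps $\shift$ and $\trans$ being forced by the definition; hence there are only finitely many $w$-strategies. The fastest of them, $\strat^{\star}$, is itself a $w$-matching machine of order $\size{w}-1$, and by the reduction its speed dominates that of every machine in the class, which is the assertion.

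First I would normalise $\matchine$. Passing to the full memory expansion $\expan{\matchine}$ changes neither the accessed positions nor the number of text accesses, hence leaves $\as{\piid}{\matchine}$ unchanged, and preserves the order since $\expan{\nextmm}((q,H))=\nextmm(q)$; so I may assume $\matchine$ standard. Invoking the compaction result of \cite{DidierX} I may further assume $\matchine$ compact and at least as fast, compaction never raising the order. The payoff is that a compact standard machine performs no re-reads: were the position $\nextmm(q)$ already present in $\mem{q}$, say paired with a symbol $a$, then the expansion rules would give $\trans(q,a)\neq\sink$ while $\trans(q,x)=\sink$ for every $x\neq a$, violating the first compactness condition. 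Thus $\nextmm(q)\notin\frst(\mem{q})$ for every state $q$; since $\nextmm(q)\in\{0,\ldots,\size{w}-1\}$ the read set $\frst(\mem{q})$ is a \emph{proper} subset of $\{0,\ldots,\size{w}-1\}$, and because the expansion carries $\mem{q}$ along transitions through the shift operator $\kshifted{k}{\cdot}$, every text position is accessed at most once along any run.

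Next I would read a strategy off the normalised machine by projecting each state $q$ onto its read set $\frst(\mem{q})$. The guiding invariant, which I would establish by induction along runs, is that the symbols recorded in $\mem{q}$ at the positions of $\frst(\mem{q})$ are exactly the corresponding pattern symbols — equivalently, that a mismatch forces an immediate shift. This is what allows a strategy to retain only the position set and to recompute the shift from $w$, $\frst(\mem{q})$, $\nextmm(q)$ and the symbol just read, via the minimal-safe-shift formula of the definition: validity forbids overshooting the largest safe shift, and Theorem 1 of \cite{DidierX} certifies that the minimal safe shift is itself valid, so the projected shift and transition are precisely those prescribed for a $w$-strategy. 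Two states sharing the same read set must then be identified; for this to yield a bona fide strategy, $\nextmm$ must depend on the read set alone, which is the point where I would invoke the structural theorem of \cite{DidierX} asserting that the greatest speed in the class is already realised by a machine whose states \emph{are} subsets of positions of $w$ — for such a machine the identification is automatic and the verification of the strategy axioms above is all that remains.

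The hard part is precisely this structural optimality, and it is the step I would import from \cite{DidierX} rather than reprove. Both the normalised machine and any strategy access each text position at most once, so comparing speeds amounts, via Equation \ref{eqAsympFreq} and the stationary distribution of the Markov chain of Theorem \ref{theoiid}, to comparing the expected number of distinct positions read per unit length of text. The genuine obstacle is that a general machine may, after detecting a mismatch, deliberately keep reading positions of the current dead window as lookahead in order to save accesses in later windows; ruling out that any such policy beats the minimal-shift, subset-state policy is the combinatorial heart of the matter. Once that optimality is in hand, the normalisation and axiom-checking above turn the optimal subset-state machine into a $w$-strategy, and together with the finiteness argument of the first paragraph this exhibits the desired fastest strategy.
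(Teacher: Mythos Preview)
Your approach is essentially the paper's: invoke the structural result from \cite{DidierX} to obtain an optimal machine with good properties, then verify it is a $w$-strategy. The paper does this more directly by importing all five properties at once from Corollary~2 of \cite{DidierX} --- standard, compact, valid, all states \emph{relevant}, and no two states sharing the same $\mem{\cdot}$ --- and then checking the strategy axioms.

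Your exposition has one confusion worth flagging. The ``guiding invariant'' --- that the symbols recorded in $\mem{q}$ coincide with the pattern at the positions of $\frst(\mem{q})$ --- cannot be established by induction from validity, standardness, and compactness alone. A valid, standard, compact machine may, after a mismatch, under-shift and retain a known-mismatch position in $\mem{q}$; the resulting state is merely \emph{irrelevant} (it can never lead to a match at the current window), not forbidden. For the same reason, validity only upper-bounds the shift; pinning it to the \emph{minimal} safe shift requires that the target state be relevant, as the paper notes explicitly. Both points are precisely the relevance property (item~4 in the paper's list), which is part of what Corollary~2 of \cite{DidierX} supplies rather than something you can derive beforehand. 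Once you import relevance together with the other properties --- and property~5 to make the projection onto $\frst(\mem{q})$ injective --- your verification goes through and coincides with the paper's.
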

\begin{proof}
The Corollary 2 of  \cite{DidierX} implies that there exists a $w$-matching machine  which achieves the greatest asymptotic speed among those of order $\size{w}-1$ and which is
\begin{enumerate}
	\item standard, 
	\item compact,
	\item valid,
	\item in which all the states are relevant (i.e. such that they may lead to a match without any positive shift \cite{DidierX}),
	\item such that there is no pair of states $(q,q')$ with $q\neq q'$ and $\mem[\matchine_{\piid}]{q} = \mem[\matchine_{\piid}]{q'}$.
\end{enumerate}
Let us verify that a $w$-matching machine $\matchine=(\stateset, \init, \prematchset, \nextmm, \trans, \shift)$ of order $\size{w}-1$ satisfying the properties above is (isomorphic to) a $w$-strategy. Since it verifies in particular the properties 4 and 5, its set of states $\stateset$ is in bijection with a subset of $\powerset{\{0, \ldots, \size{w}-1\}}$. Let us identify all states $q$ of $Q$ with $\frst(\mem{q})$, its corresponding element of $\powerset{\{0, \ldots, \size{w}-1\}}$. Since $\matchine$ is standard, compact and of order $\size{w}-1$, we do not have $\{0, \ldots, \size{w}-1\}\in\stateset$. 
Moreover, since $\matchine$ is standard, we have $\trans(\stl, x) = \shiftSet{\stl\cup\{i\}}{\shift(\stl, x)}$ for all $\stl\in\stateset$.
Last, by construction, if 
\begin{dmath*}\shift(\stl,x)>\left\{\begin{array}{ll}\min\{k\geq 1\condi  w_{\nextmm(\stl)-k} = x \mbox{ if } \nextmm(\stl)\geq k \mbox{ and } w_{j} = w_{j+k} \mbox{ for all } j\in\shiftSet{\stl}{k}\} &\mbox{if } \stl\in\prematchset,\\
\min\{k\geq 0\condi  w_{\nextmm(\stl)-k} = x \mbox{ if } \nextmm(\stl)\geq k \mbox{ and } w_{j} = w_{j+k} \mbox{ for all } j\in\shiftSet{\stl}{k}\} &\mbox{otherwise,}
\end{array}\right.\end{dmath*}
then $\matchine$ is not valid, and if 
\begin{dmath*}\shift(\stl,x)<\left\{\begin{array}{ll}\min\{k\geq 1\condi  w_{\nextmm(\stl)-k} = x \mbox{ if } \nextmm(\stl)\geq k \mbox{ and } w_{j} = w_{j+k} \mbox{ for all } j\in\shiftSet{\stl}{k}\} &\mbox{if } q\in\prematchset,\\
\min\{k\geq 0\condi  w_{\nextmm(\stl)-k} = x \mbox{ if } \nextmm(\stl)\geq k \mbox{ and } w_{j} = w_{j+k} \mbox{ for all } j\in\shiftSet{\stl}{k}\} &\mbox{otherwise,}
\end{array}\right.\end{dmath*}
then $\trans(\stl,x)$ is not relevant.
\end{proof}

\section{Position lattices}\label{secLattice}

The position lattice of a pattern $w$ is the 3-uple $\lattice=(\stateLattice, (\transLattice{s})_{s\in\stateLattice}, (\shiftLattice{s})_{s\in\stateLattice})$ where, by putting $\complementary{\stl}$ for  $\setMinus{\{0, \ldots, \size{w}-1\}}{\stl}$, 
\begin{itemize}
	\item $\stateLattice = \setMinus{\powerset{\{0, \ldots, \size{w}-1\}}}{\{\{0, \ldots, \size{w}-1\}\}}$, i.e. the set made of all the subsets of positions of $w$ but $\{0, \ldots, \size{w}-1\}$,
	\item for all $\stl \in \stateLattice$, $\shiftLattice{\stl}$ is a map from $\complementary{\stl}\times \alp$ to $\{0, \ldots, \size{w}\}$,
	\item for all $\stl \in \stateLattice$, $\transLattice{\stl}$ is a map from $\complementary{\stl}\times \alp$ to $\stateLattice$,
\end{itemize}
where, for all $\stl \in \stateLattice$, all $i\in\complementary{\stl}$ and all $x\in\alp$, we have
\begin{dmath*}\shiftLattice{\stl}(i, x)=\left\{\begin{array}{ll}\min\{k\geq 1\condi  w_{\nextmm(\stl)-k} = x \mbox{ if } \nextmm(\stl)\geq k \mbox{ and } w_{j} = w_{j+k} \mbox{ for all } j\in\shiftSet{\stl}{k}\} &\mbox{if } \card{\stl} = \size{w}-1,\\
\min\{k\geq 0\condi  w_{\nextmm(\stl)-k} = x \mbox{ if } \nextmm(\stl)\geq k \mbox{ and } w_{j} = w_{j+k} \mbox{ for all } j\in\shiftSet{\stl}{k}\} &\mbox{otherwise,}
\end{array}\right.\end{dmath*}
and
\begin{dmath*}
	\transLattice{\stl}(i, x) = \shiftSet{\stl\cup\{i\}}{\shiftLattice{\stl}(i, x)}.
\end{dmath*}
In particular, if $x = w_{i}$ and $\card{\stl} < \size{w}-1$ then we have $\shiftLattice{\stl}(i, x) = 0$ and $\transLattice{\stl}(i, x) = \stl\cup\{i\}$.

Let us remark that, since $\max(\stl)\leq\size{w}-1$ for all $\stl \in \stateLattice$, we have, for all $i\in\complementary{\stl}$ and all $x\in\alp$, 
$\shiftSet{\stl\cup\{i\}}{\size{w}}=\emptyset$, thus  $\shiftLattice{\stl}(i, x)\leq\size{w}$ which is consistent with the definition of $\shiftLattice{\stl}$.

The edges of $\lattice$ are the pairs $(\stl, \transLattice{\stl}(i, x))$ for all $\stl \in \stateLattice$, all $i\in\complementary{\stl}$ and all $x\in\alp$ (see Figure \ref{figLattice}).

\begin{figure*}[!tpb]
\centering{\includegraphics[width=\textwidth, trim=0cm -0.0cm 0cm 0cm]{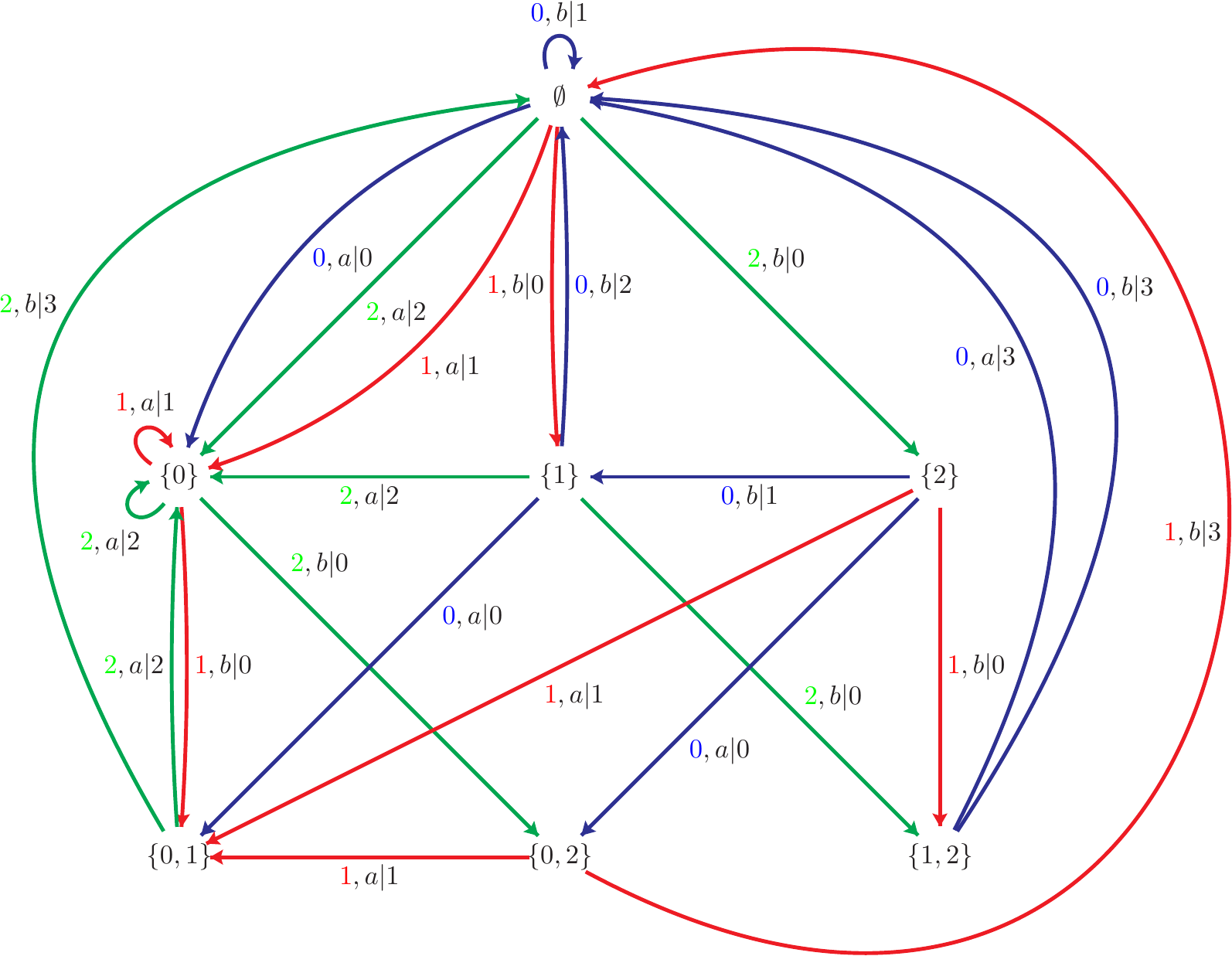}}
\caption{Position lattice of the pattern $abb$. Vertices represent the states of $\lattice[abb]$. For all states $\stl$, there is an outgoing edge for all pairs $(i, x)$ with $i\in\setMinus{\{0, \ldots, \size{abb}-1\}}{\stl}$ and $x\in\alp$. This outgoing edge is labeled with ``$i, x|\shiftLattice[abb]{\stl}(i,x)$'', is colored according to $i$, and goes to $\transLattice[abb]{\stl}(i,x)$.}
\label{figLattice}
\end{figure*}

\begin{remark}\label{remSizeLattice}
The position lattice of $w$ contains $2^{\size{w}}-1$ states and $\card{\alp}.\size{w}.2^{\size{w}-1}$ edges.
\end{remark}

\begin{remark}\label{remComp}
Let $\stl$ be a state of $\stateLattice$, $i$ and $j$ be two positions in $\complementary{\stl}$ such that $i\neq j$ and $x$ and $y$ be two symbols of $\alp$.
We have
\begin{dgroup*}
\begin{dmath*}
\shiftLattice{\transLattice{\stl}(i, x)}(j-\shiftLattice{\stl}(i, x), y)+\shiftLattice{\stl}(i, x) = \shiftLattice{\transLattice{\stl}(j, y)}(i-\shiftLattice{\stl}(j, y), x)+\shiftLattice{\stl}(j, y) \condition{and}
\end{dmath*}
\begin{dmath*}
\transLattice{\transLattice{\stl}(i, x)}(j-\shiftLattice{\stl}(i, x), y) = \transLattice{\transLattice{\stl}(j, y)}(i-\shiftLattice{\stl}(j, y), x).
\end{dmath*}
\end{dgroup*}
By considering the particular case where $x=w_{i}$, we get 
\begin{dgroup*}
\begin{dmath*}
\shiftLattice{\stl\cup\{ i \}}(j, y) = \shiftLattice{\transLattice{\stl}(j, y)}(i-\shiftLattice{\stl}(j, y), w_{i})+\shiftLattice{\stl}(j, y)\condition{and}
\end{dmath*}
\begin{dmath*}
\transLattice{\stl\cup\{ i \}}(j, y) = \transLattice{\transLattice{\stl}(j, y)}(i-\shiftLattice{\stl}(j, y), w_{i}).
\end{dmath*}
\end{dgroup*}
\end{remark}

Let $\precid$ be the table indexed on $\{0, \ldots, \size{w}-1\}\times\alp$ and in which, for all positions $i$ of $w$ and all symbols $x$ of $\alp$, the entry $\precocc{i}{x}$ is defined as
\begin{dmath*}
\precocc{i}{x} = \left\{\begin{array}{ll}\max\{j\leq i \condi  w_{j} = x\} & \mbox{if } \{j\leq i \condi  w_{j} = x\}\neq \emptyset,\\
\nul & \mbox{otherwise.}\end{array}\right.
\end{dmath*}
For instance, the table $\precid[abb]$ is 
$$
\begin{array}{r|cc}
 & a & b \\ \hline
0 & 0 & \nul\\
1 & 0 & 1 \\
2 & 0 & 2 
\end{array}
$$

\newcommand{\KwBordMax}{\mathrm{B}}

\begin{lemma}\label{lemmaLattice}
	Let $\stl$ be a state of $\stateLattice$, $i$ a position in $\complementary{\stl}$ and $x$ a symbol of $\alp$.
	\begin{enumerate}
	\item If $x=w_{i}$ then 
    \begin{itemize}
    	\item if $\card{\stl}=\size{w}-1$ then $\transLattice{\stl}(i,x) = \{0,\ldots,\KwBordMax-1\}$ 	and $\shiftLattice{\stl}(i,x) = \size{w}-\KwBordMax$, where $\KwBordMax$ is the length of the longest proper suffix of $w$ which is a prefix of $w$;
    	\item otherwise $\transLattice{\stl}(i,x) = \stl\cup\{i\}$ 	and $\shiftLattice{\stl}(i,x) = 0$.
        \end{itemize}
	\item If $x\neq w_{i}$,
	\begin{enumerate}
	\item if $s = \emptyset$ then
	\begin{dgroup*}
		\begin{dmath*}
			\transLattice{\emptyset}(i,x) = \left\{\begin{array}{ll}
			\{\precocc{i}{x}\} & \mbox{ if } \precocc{i}{x} \neq \nul,\\
			\emptyset & \mbox{ otherwise,}\\
			\end{array}\right.
		\end{dmath*}
		\begin{dmath*}
			\shiftLattice{\emptyset}(i,x) = \left\{\begin{array}{ll}
			i-\precocc{i}{x} & \mbox{ if } \precocc{i}{x} \neq \nul,\\
			i+1 & \mbox{ otherwise,}\\
			\end{array}\right.
		\end{dmath*}
	\end{dgroup*}
	\item if $s\neq\emptyset$ then for all $\ell\in\stl$, we have  
	\begin{dgroup*}
		\begin{dmath*}
			\transLattice{\stl}(i,x) = 
			\transLattice{\transLattice{\setMinus{\stl}{{\{\ell\}}}}(i,x)}(\ell-\shiftLattice{\setMinus{\stl}{{\{\ell\}}}}(i,x), w_{\ell}),
		\end{dmath*}
		\begin{dmath*}
			\shiftLattice{\stl}(i,x) = 
			\shiftLattice{\transLattice{\setMinus{\stl}{{\{\ell\}}}}(i,x)}(\ell-\shiftLattice{\setMinus{\stl}{{\{\ell\}}}}(i,x), w_{\ell}) + \shiftLattice{\setMinus{\stl}{{\{\ell\}}}}(i,x).
		\end{dmath*}
	\end{dgroup*}
	\end{enumerate}
	\end{enumerate}
\end{lemma}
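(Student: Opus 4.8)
The plan is to evaluate $\shiftLattice{\stl}(i,x)$ and $\transLattice{\stl}(i,x)$ by unfolding their definitions, reading the examined position in the $\min$-formula as $i$ and using $\transLattice{\stl}(i,x) = \shiftSet{\stl\cup\{i\}}{\shiftLattice{\stl}(i,x)}$ throughout. Two of the four cases are essentially immediate: the case $x = w_i$ with $\card{\stl} < \size{w}-1$ is exactly the ``in particular'' clause of the definition of the position lattice (here $k=0$ satisfies both constraints, since $w_i = x$ and $\shiftSet{\stl}{0} = \stl$ makes the overlap condition trivial), and the recursion of part~2(b) will follow directly from Remark~\ref{remComp}. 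So the real work lies in part~1 with $\card{\stl} = \size{w}-1$ (the border case) and in part~2(a) (the empty-state case), where one must actually solve the minimization.

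For part~2(b), I would first note that $\ell$ and $i$ are two distinct elements of $\complementary{\setMinus{\stl}{\{\ell\}}}$ (indeed $\ell \notin \setMinus{\stl}{\{\ell\}}$, while $i \notin \stl \supseteq \setMinus{\stl}{\{\ell\}}$, and $\ell \in \stl \not\ni i$). Then I apply the particular case ($x = w_i$) of Remark~\ref{remComp} to the state $\setMinus{\stl}{\{\ell\}}$, with the position $\ell$ carrying the matching symbol $w_\ell$ playing the role of ``$i$'' and the pair $(i,x)$ playing the role of ``$(j,y)$''. Since $\setMinus{\stl}{\{\ell\}}\cup\{\ell\} = \stl$, the two displayed identities of that remark become verbatim the two formulas claimed in~2(b). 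The hypothesis $x \neq w_i$ is not even needed here; it only serves to delimit this case from part~1.

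The main obstacle is part~1 with $\card{\stl} = \size{w}-1$. Here $\complementary{\stl} = \{i\}$, so $\stl \cup \{i\} = \{0,\ldots,\size{w}-1\}$ and $\stl$ is a pre-match state, whence $\shiftLattice{\stl}(i,x)$ is the minimum over $k \geq 1$. Unfolding, an integer $k \geq 1$ is admissible iff $w_{i-k} = w_i$ whenever $i \geq k$ and $w_{j-k} = w_j$ for every $j \in \stl$ with $j \geq k$; because $i$ is the unique position missing from $\stl$, these two requirements merge into the single condition $w_j = w_{j+k}$ for all $0 \leq j \leq \size{w}-1-k$, i.e. that $w$ admits a proper border of length $\size{w}-k$. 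The delicate point is this order reversal: the smallest admissible shift corresponds to the largest proper-border length $\KwBordMax$, giving $\shiftLattice{\stl}(i,x) = \size{w}-\KwBordMax$; substituting into $\shiftSet{\{0,\ldots,\size{w}-1\}}{\size{w}-\KwBordMax}$ then yields $\{0,\ldots,\KwBordMax-1\}$. One should also check the degenerate subcase where $w$ has no nonempty proper border ($\KwBordMax = 0$), which forces the full shift $\size{w}$ and the empty transition.

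Finally, for part~2(a) with $\stl = \emptyset$, the overlap condition is vacuous because $\shiftSet{\emptyset}{k} = \emptyset$, and (for $\size{w} \geq 2$) $\emptyset$ is not a pre-match state, so $\shiftLattice{\emptyset}(i,x)$ is the least $k \geq 0$ with ``$i \geq k \Rightarrow w_{i-k} = x$''. Minimizing amounts to aligning $i$ with the rightmost position $\leq i$ bearing $x$: if such a position exists it is $\precocc{i}{x}$ and the minimum is $k = i - \precocc{i}{x}$, with $i$ surviving the shift to give the transition $\{\precocc{i}{x}\}$; if none exists the constraint can only be made vacuous by taking $k = i+1$, which shifts $i$ out and gives transition $\emptyset$. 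This is precisely the definition of $\precid$, so the stated formulas follow, and the expressions remain correct in the trivial case $\size{w}=1$, where $\precocc{i}{x} = \nul$ and the $k \geq 1$ branch returns $i+1 = 1$.
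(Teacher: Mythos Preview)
Your proposal is correct and follows the same route as the paper's proof, which is a one-liner stating that every case except $x\neq w_i$, $\stl\neq\emptyset$ is immediate from the definition of $\lattice$, while that remaining case is exactly Remark~\ref{remComp}. You simply unfold those ``immediate'' verifications (the border argument for $\card{\stl}=\size{w}-1$ and the $\precid$ argument for $\stl=\emptyset$) and spell out the substitution into Remark~\ref{remComp}, which is more thorough but not a different approach.
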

\begin{proof}
The only case which does not immediately follow from the definition of $\lattice$, is when $x\neq w_{i}$ and $s\neq\emptyset$ which is given by Remark \ref{remComp}.
\end{proof}

The relation  $\leql$ on $\stateLattice$ is defined as follows. For all sets $\stl$ and $\stl'$ in $\stateLattice$, we have $\stl\leql\stl'$ if one of the following properties holds:
\begin{itemize}
\item $\card{\stl}<\card{\stl'}$,
\item $\card{\stl}=\card{\stl'}$, $\stl\neq\stl'$ and $\min(\stl\symdif\stl')\in\stl$, where $\stl\symdif\stl'$ is the symmetric difference of $\stl$ and $\stl'$,
\item $\stl=\stl'$.
\end{itemize}
The relation $\leql$ defines a total order on $\stateLattice$. We write ``~$\stl\lsl\stl'$~'' for ``~$\stl\leql\stl'$ and $\stl\neq\stl'$~''.

\begin{lemma}\label{lemmaOrder}
Let $\stl$ be a state of $\stateLattice$ with $\card{\stl}>1$, $i$ a position in $\complementary{\stl}$ and $x$ a symbol of $\alp$.
If $x\neq w_{i}$ then $\transLattice{\setMinus{\stl}{{\{\max\stl\}}}}(i,x)\lsl\stl$.
\end{lemma}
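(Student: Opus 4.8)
The plan is to unfold the definition of $\transLattice{}$ and then compare the resulting set with $\stl$ directly through the two defining clauses of the total order $\leql$. First I would write $m=\max\stl$, $\stl'=\setMinus{\stl}{\{m\}}$ (so that $\card{\stl'}=\card{\stl}-1$), and $\tmpshift=\shiftLattice{\stl'}(i,x)$. Since $i\in\complementary{\stl}\subseteq\complementary{\stl'}$ we have $i\notin\stl'$, hence $\card{\stl'\cup\{i\}}=\card{\stl}$. By the definition of the lattice, $\transLattice{\stl'}(i,x)=\shiftSet{\stl'\cup\{i\}}{\tmpshift}=\{j-\tmpshift\condi j\in\stl'\cup\{i\},\ j\geq\tmpshift\}$, and the map $j\mapsto j-\tmpshift$ is injective, so $\card{\transLattice{\stl'}(i,x)}$ equals the number of elements of $\stl'\cup\{i\}$ that are at least $\tmpshift$.

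Next I would establish two facts that drive the argument. The first is that $\tmpshift\geq 1$: since $\card{\stl'}=\card{\stl}-1\leq\size{w}-2<\size{w}-1$, the ``otherwise'' branch of $\shiftLattice{\stl'}$ applies, so $\tmpshift$ is the least $k\geq 0$ satisfying the stated conditions; the value $k=0$ would require $w_{i}=x$, which is excluded by the hypothesis $x\neq w_{i}$, whence $\tmpshift\geq 1$. The second is that, because $\card{\stl}>1$, the maximum $m$ lies strictly above $\min\stl$, so discarding it leaves the minimum unchanged: $\min\stl'=\min\stl$, and therefore $\min(\stl'\cup\{i\})=\min\{i,\min\stl\}\leq\min\stl$.

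Finally I would conclude by splitting into the two clauses of $\leql$. If some element of $\stl'\cup\{i\}$ is smaller than $\tmpshift$, it is dropped by the left-shift, so $\card{\transLattice{\stl'}(i,x)}<\card{\stl}$ and $\transLattice{\stl'}(i,x)\lsl\stl$ by the first clause. Otherwise every element survives, the cardinality is preserved, and $\min\transLattice{\stl'}(i,x)=\min(\stl'\cup\{i\})-\tmpshift\leq\min\stl-1<\min\stl$; this value lies in $\transLattice{\stl'}(i,x)$ but not in $\stl$, so it is the least element of the symmetric difference $\transLattice{\stl'}(i,x)\symdif\stl$ and belongs to $\transLattice{\stl'}(i,x)$, giving $\transLattice{\stl'}(i,x)\lsl\stl$ by the second clause. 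In both cases $\transLattice{\stl'}(i,x)\lsl\stl$, as required. The only delicate point is the first fact, namely pinning down which branch of $\shiftLattice{}$ applies and checking that $k=0$ is forbidden so that the shift is strictly positive; everything after that is routine bookkeeping with the order and the left-shift operation.
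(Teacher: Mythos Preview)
Your proof is correct and follows essentially the same approach as the paper's: establish that the shift is strictly positive (because $x\neq w_i$ forces $k=0$ to fail in the ``otherwise'' branch of $\shiftLattice{}$), note that $\min\stl'=\min\stl$ since $\card{\stl}>1$, and then split on whether the left-shift discards an element (cardinality clause of $\lsl$) or not (minimum/symmetric-difference clause). Your write-up is in fact more explicit than the paper's, which glosses over why the shift is positive and which branch of $\shiftLattice{}$ is active.
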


\begin{proof}
	Under the assumption that $\card{\stl}>1$, we have $\min \stl = \min (\setMinus{\stl}{{\{\max\stl\}}})$. 
	By construction, the fact that $x\neq w_{i}$ implies that $\shiftLattice{\setMinus{\stl}{{\{\max\stl\}}}}(i,x)>0$.
	
	If we have $$\min((\setMinus{\stl}{{\{\max\stl\}}})\cup\{i\})<\shiftLattice{\setMinus{\stl}{{\{\max\stl\}}}}(i,x)$$ then 
	$\card{\transLattice{\setMinus{\stl}{{\{\max\stl\}}}}(i,x)}<\card{\stl}$, thus $\transLattice{\setMinus{\stl}{{\{\max\stl\}}}}(i,x)\lsl\stl$.{}
	
	Otherwise, we have  $\card{\transLattice{\setMinus{\stl}{{\{\max\stl\}}}}(i,x)}=\card{\stl}$ but since necessarily $$\min\transLattice{\setMinus{\stl}{{\{\max\stl\}}}}(i,x)\leq \min \card{\stl}-\shiftLattice{\setMinus{\stl}{{\{\max\stl\}}}}(i,x)<\min \card{\stl},$$ we get again $\transLattice{\setMinus{\stl}{{\{\max\stl\}}}}(i,x)\lsl\stl$.
\end{proof}

\newcommand{\KwPrec}{\mathrm{prec}}
\newcommand{\KwLast}{\mathrm{last}}
\newcommand{\KwVarB}{\mathrm{S}}
\newcommand{\seco}{;\hskip 0.25cm }
\newcommand{\KwGetBordMax}{\texttt{GetLastBord}}
\begin{algorithm}[htp]
\SetKw{KwAnd}{and}
\SetKw{KwCompute}{compute}
\SetKw{print}{print}
\SetKwInOut{Input}{input}
\SetKwInOut{Output}{output}
	\SetAlgoLined
	\LinesNumbered
	\LinesNumbered
{\small
		$\KwBordMax\leftarrow$ length of the longest proper suffix of $w$ which is also a prefix\;
		\lFor{$x\in\alp$} {
			$\KwLast[x]\leftarrow\nul$
		}
		\For{$i = 0$ \KwTo $\size{w}-1$\nllabel{liLoopSA}} {
			$\KwLast[w_{i}]\leftarrow i$\;
			\For{$x\in\alp$\nllabel{liLoopSAI}} {
				\If{$\KwLast[x]\neq\nul$}{
					$\shiftLattice{\emptyset}(i,x)\leftarrow i-\KwLast[x]$\seco $\transLattice{\emptyset}(i,x)\leftarrow\{\KwLast[x]\}$\;
				} \Else {
					$\shiftLattice{\emptyset}(i,x)\leftarrow i+1$\seco $\transLattice{\emptyset}(i,x)\leftarrow\emptyset$\nllabel{liLoopEA}\;
				}
			}
		}
		\For{$i = 0$ \KwTo $\size{w}-1$\nllabel{liLoopSB}} {
			\For{$j = 0$ \KwTo $i-1$} {
				\For{$x\in\alp$} {
						$\shiftLattice{\{i\}}(j,x)\leftarrow \shiftLattice{\emptyset}(j,x)+\shiftLattice{\transLattice{\emptyset}(j,x)}(i-\shiftLattice{\emptyset}(j,x), w_{i})$\nllabel{liLoopSBLa}\;
						$\transLattice{\{i\}}(j,x)\leftarrow \transLattice{\transLattice{\emptyset}(j,x)}(i-\shiftLattice{\emptyset}(j,x), w_{i})$\nllabel{liLoopSBLb}\;
				}
			}
			\For{$j = i+1$ \KwTo $\size{w}-1$} {
				\For{$x\in\alp$} {
					\If{$x=w_{j}$}{
						\If{$\size{w}=2$} {
							$\shiftLattice{\stl}(i,x) \leftarrow \size{w}-\KwBordMax$\seco $\transLattice{\stl}(i,x) \leftarrow \{0,\ldots,\KwBordMax-1\}$\;
						} \Else {
                            	$\shiftLattice{\{i\}}(j,x)\leftarrow 0$\seco $\transLattice{\{i\}}(j,x)\leftarrow \{i,j\}$\;
						}
					} \Else {
						$\shiftLattice{\{i\}}(j,x)\leftarrow \shiftLattice{\transLattice{\emptyset}(i-1,w_{i})}(j-\shiftLattice{\emptyset}(i-1,w_{i}),x)$\nllabel{liLoopSBLc}\;
						$\transLattice{\{i\}}(j,x)\leftarrow \transLattice{\transLattice{\emptyset}(i-1,w_{i})}(j-\shiftLattice{\emptyset}(i-1,w_{i}),x)$\nllabel{liLoopEB}\;
					}
				}
			}
		}
		\For{$\ell = 2$ \KwTo $\size{w}-1$\nllabel{liLoopSC}} {
			\lFor{$j = 0$ \KwTo $\ell-1$} {
				$\KwVarB[j]\leftarrow j$
			}
			\Repeat{$j<0$}{
				$\stl\leftarrow \{\KwVarB[0], \ldots, \KwVarB[\ell-1]\}$\seco $\stl'\leftarrow \{\KwVarB[0], \ldots, \KwVarB[\ell-2]\}$\;
				\For{$i\in\setMinus{\{0,\ldots,\size{w}-1\}}{s}$} {
					\For{$x\in\alp$} {
                    	\If{$x=w_{i}$} {
                    		\If{$\ell=\size{w}-1$} {
								$\shiftLattice{\stl}(i,x) \leftarrow \size{w}-\KwBordMax$\seco $\transLattice{\stl}(i,x) \leftarrow \{0,\ldots,\KwBordMax-1\}$\;
							} \Else {
                            	$\shiftLattice{\stl}(i,x)\leftarrow 0$\seco $\transLattice{\stl}(i,x)\leftarrow \stl\cup\{i\}$\;
							}
						} \Else {
							$\shiftLattice{\stl}(i,x) \leftarrow \shiftLattice{\stl'}(i,x) + \shiftLattice{\transLattice{\stl'}(i,x)}(\KwVarB[\ell-1]-\shiftLattice{\stl'}(i,x), w_{\KwVarB[\ell-1]})$\;
						$\transLattice{\stl}(i,x) \leftarrow \transLattice{\transLattice{\stl'}(i,x)}(\KwVarB[\ell-1]-\shiftLattice{\stl'}(i,x), w_{\KwVarB[\ell-1]})$\;
						}
					}
				}
				$j\leftarrow\ell-1$\nllabel{liLoopSCLa}\;
				\lWhile{$j\geq 0$ \KwAnd $\KwVarB[j]\geq \size{w}-\ell+j$} {
					$j\leftarrow j-1$
				}
				\If{$j\geq 0$} {
					$\KwVarB[j]\leftarrow \KwVarB[j]+1$\;
					\lFor{$k = j+1$ \KwTo $\ell-1$} {
						$\KwVarB[k]\leftarrow \KwVarB[k-1]+1$
					}
				}
			}\nllabel{liLoopEC}
		}
}
\caption{Computation of the position lattice. Value $\KwBordMax$ is the last entry of the \textit{Partial Match table} of the KMP algorithm. Its computation takes a time linear with $\size{w}$ \cite{Knuth1977}.}\label{algoLatt}
\end{algorithm}

\begin{theorem}
	Algorithm \ref{algoLatt} computes the  position lattice of the pattern $w$ in $\order{\size{w}2^{\size{w}}}$ time by using the same amount of memory.
\end{theorem}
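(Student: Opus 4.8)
The plan is to prove two things separately: that Algorithm~\ref{algoLatt} assigns to every entry $\shiftLattice{\stl}(i,x)$ and $\transLattice{\stl}(i,x)$ exactly the value prescribed by the definition of $\lattice$, and that it does so within the claimed time and memory bounds. For correctness I would argue by induction following the order in which the algorithm treats the states, namely the empty set first (lines~\ref{liLoopSA}--\ref{liLoopEA}), then the singletons (lines~\ref{liLoopSB}--\ref{liLoopEB}), and finally the states of cardinality $\ell = 2,\ldots,\size{w}-1$ in turn (lines~\ref{liLoopSC}--\ref{liLoopEC}). The induction hypothesis is that, when the row of a state $\stl$ is about to be filled, every table entry it reads has already received its correct value.

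For the base cases I would first check that the variable $\KwLast[x]$ maintained in the first loop equals $\precocc{i}{x}$ at the moment position $i$ is processed; the assignments for $\transLattice{\emptyset}$ and $\shiftLattice{\emptyset}$ are then literally the $\stl=\emptyset$ subcase of Lemma~\ref{lemmaLattice}, while the subcase $x = w_i$ (where $\KwLast[x]=i$) reproduces the $\card{\stl}<\size{w}-1$ branch of its first item. The singleton and the $\ell\geq 2$ loops both implement the recursion of the $\stl\neq\emptyset$, $x\neq w_i$ case of Lemma~\ref{lemmaLattice} (equivalently the specialization of Remark~\ref{remComp}) by removing one element of $\stl$: for $\card{\stl}\geq 2$ the removed element is $\max\stl = \KwVarB[\ell-1]$ and $\stl' = \setMinus{\stl}{\{\max\stl\}}$, and for a singleton the recursion bottoms out at $\emptyset$. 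The matching subcases $x = w_i$ are handled by the first item of Lemma~\ref{lemmaLattice}, using the border length $\KwBordMax$ precisely when $\card{\stl} = \size{w}-1$. The only delicate point here is that the algorithm chooses its decompositions so that every intermediate position argument stays non-negative (this is why the singleton loop reads $\transLattice{\emptyset}(i-1,w_i)$ for $j>i$ instead of the naive decomposition, which can produce a negative position); I would verify that each such assignment is a valid instance of the identities of Remark~\ref{remComp}.

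The crux of the proof is to justify the induction hypothesis, i.e.\ that the states are processed in an order compatible with the dependencies of these recursions. I would first show that the enumeration driven by the array $\KwVarB$ lists every $\ell$-subset of $\{0,\ldots,\size{w}-1\}$ exactly once, maintained as a strictly increasing sequence $\KwVarB[0]<\cdots<\KwVarB[\ell-1]$, in increasing lexicographic order of that sequence (the test $\KwVarB[j]\geq\size{w}-\ell+j$ detecting the entries already at their maximal value). A short argument then shows that, for sets of equal cardinality, this lexicographic order coincides with $\leql$; together with the fact that the phases treat cardinalities in increasing order, this proves that all states are visited in $\leql$-increasing order. Consequently, when the row of $\stl$ is computed, the row of $\stl' = \setMinus{\stl}{\{\max\stl\}}$ (of smaller cardinality) is already available, and so is the row of $\transLattice{\stl'}(i,x)$: by Lemma~\ref{lemmaOrder} the latter satisfies $\transLattice{\stl'}(i,x)\lsl\stl$, hence it was treated earlier. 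This is the step I expect to be the main obstacle, since it is what makes the whole recursive scheme well founded.

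Finally, for the complexity I would invoke Remark~\ref{remSizeLattice}: the lattice has $2^{\size{w}}-1$ states and $\card{\alp}\cdot\size{w}\cdot 2^{\size{w}-1}$ edges, carrying one entry of $\shiftLattice{}$ and one of $\transLattice{}$ each. Representing every state by the integer encoding of its characteristic vector lets each table access, and in particular the lookup of a transition target such as $\transLattice{\stl'}(i,x)$, run in constant time, so each edge is filled in $O(1)$; the bookkeeping of the array $\KwVarB$ is dominated by the inner loops over $i\in\complementary{\stl}$ and $x\in\alp$. Summing over all edges yields $\order{\card{\alp}\cdot\size{w}\cdot 2^{\size{w}}}$, that is $\order{\size{w}2^{\size{w}}}$ for a fixed alphabet, and the stored tables occupy the same order of memory, which gives the statement.
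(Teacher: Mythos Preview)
Your proposal is correct and follows essentially the same route as the paper's proof: both establish that the algorithm visits states in $\leql$-increasing order, invoke Lemma~\ref{lemmaLattice} for the concrete formulas in each phase, and rely on Lemma~\ref{lemmaOrder} to certify that the recursive look-ups in the $\ell\geq 2$ loop refer to rows already filled. Your treatment is in fact slightly more explicit than the paper's on two points---identifying the $\KwVarB$-driven enumeration with the standard next-combination-in-lex procedure and arguing constant-time table access via the characteristic-vector encoding---but these are elaborations of the same argument rather than a different one; the paper's handling of the singleton case $i<j$ gives a direct identity rather than your ``keep positions non-negative'' rationale, yet both justify the same line of the algorithm.
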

\begin{proof}
	Let us first show that Algorithm \ref{algoLatt} determines the shifts and the transitions of the state $\stl$ before those of the state $\stl'$ if and only if $\stl\lsl\stl'$. The loop at Lines \ref{liLoopSA}-\ref{liLoopEA} computes the shifts and the transitions of $\emptyset$. Next, the loop at Lines \ref{liLoopSB}-\ref{liLoopEB} computes the shifts and the transitions of the singletons from $\{0\}$ to $\{\size{w}-1\}$. The last loop (Lines \ref{liLoopSC}-\ref{liLoopEC}) determines the shifts and the transitions of the states corresponding to the subsets of increasing cardinals  $\ell$ from $2$ to $\size{w}-1$. Inside the last loop, the way in which the next subset $\stl'$ is computed from the current subset $\stl$, both of cardinal $\ell$, ensures that $\stl\lsl\stl'$  (Lines \ref{liLoopSCLa}-\ref{liLoopEC}).
	
	For all iterations $i$ of the loop at Lines \ref{liLoopSA}-\ref{liLoopEA} and all symbols $x$, we have $\KwLast[x]=\precocc{i}{x}$ at the beginning of the inner loop (Line \ref{liLoopSAI}). From Lemma \ref{lemmaLattice} (Cases 1 and 2a), the transitions $\transLattice{\emptyset}(i,x)$ and the shifts $\shiftLattice{\emptyset}(i,x)$ for all positions $i$ of $w$ and all symbols $x$, are correctly computed at the end of the loop.
	
	The loop at Lines \ref{liLoopSB}-\ref{liLoopEB} computes the shifts and the transitions from the singleton states. For all pairs of positions $(i,j)$ and all symbols $x$, determining $\transLattice{\{ i \}}(j,x)$ and $\shiftLattice{\{ i \}}(j,x)$ is performed by distinguishing between two cases.
	\begin{itemize}
	\item If $i>j$, then $\transLattice{\emptyset}(j,x)\lsl\{i\}$ and its shifts and transitions were already computed. Formula of Remark \ref{remComp} gives us those of $\{ i\}$ (Lines \ref{liLoopSBLa}-\ref{liLoopSBLb}). 
	\item If $i<j$, we distinguish between two subcases according to the symbol $x$ considered. If $x=w_{j}$ then the shift and the transition state are given in Lemma \ref{lemmaLattice} - Case 1. Otherwise, we remark that, since $\shiftLattice{\{ i \}}(j,x)$ is positive, we have that $\shiftLattice{\{ i \}}(j,x) = \min\{k\geq 1\condi  w_{j-k} = x \mbox{ if } j\geq k \mbox{ and } w_{i-k} = w_{i}\}$. This implies that $\shiftLattice{\{ i \}}(j,x) = \shiftLattice{\transLattice{\emptyset}(i-1,w_{i})}(j,x)$. We have $\transLattice{\emptyset}(i-1,w_{i})\lsl\stl$, thus both the shifts and the transitions of the state $\transLattice{\emptyset}(i-1,w_{i})$ are computed before $\stl$ (Lines \ref{liLoopSBLc}-\ref{liLoopEB}).
	\end{itemize}
	
	The last loop, lines \ref{liLoopSC}-\ref{liLoopEC}, computes the shifts and the transitions of the states corresponding to the subsets of cardinals $2$ to $\ell-1$. For all states $\stl$ with $2\leq\card{\stl}\leq\size{w}-1$, all positions $i\in\complementary{\stl}$ and all symbols $x\neq w_{i}$, the corresponding shift and transition $\shiftLattice{\stl}(i,x)$ and $\transLattice{\stl}(i,x)$ are computed from the shifts and transitions of the state $\transLattice{\setMinus{\stl}{{\{\max\stl\}}}}(i,x)$ following Lemma \ref{lemmaLattice} - Cases 2b (in Algorithm \ref{algoLatt}, we put $\stl'$ for $\setMinus{\stl}{{\{\max\stl\}}}$). Lemma \ref{lemmaOrder} ensures that $\transLattice{\setMinus{\stl}{{\{\max\stl\}}}}(i,x)\lsl\stl$, thus that the shifts and transitions of $\transLattice{\setMinus{\stl}{{\{\max\stl\}}}}(i,x)$ are computed before those of $\stl$. For all states $\stl$ with $2\leq\card{\stl}\leq\size{w}-1$, all positions $i\in\complementary{\stl}$, the shift and transition $\shiftLattice{\stl}(i,w_{i})$ and $\transLattice{\stl}(i,w_{i})$  are given in Lemma \ref{lemmaLattice} - Case 1.
	
	The time complexity is $\order{\sum_{k=0}^{\size{w}-1} (k+\size{w}-k) {\binom{\size{w}}{k}}}$ (loop Lines \ref{liLoopSC}-\ref{liLoopEC}), i.e. $\order{\size{w}2^{\size{w}}}$. We do not use more memory than needed to store the lattice, which is, from Remark \ref{remSizeLattice}, $\order{\size{w}2^{\size{w}}}$.
\end{proof}

\section{The Fastest $w$-strategy}\label{secBrute}

Determining the fastest $w$-strategy, which, from Proposition \ref{propOptim}, has the greatest asymptotic speed among all the $w$-matching machines of order $\size{w}-1$, may be performed by computing the asymptotic speed of all the $w$-strategies and by returning the fastest one.

In order to enumerate all the $w$-strategies, let us remark that they are all contained in the position lattice of $w$ in the sense that:
\begin{itemize}
\item the set of states of a $w$-strategy is included in that of the position lattice;
\item all the $w$-strategies $\matchine=(\stateset, \init, \prematchset, \nextmm, \trans, \shift)$ verify $\trans(\stl,x) = \transLattice{\stl}(\nextmm(\stl),x)$ and $\shift(\stl,x) = \shiftLattice{\stl}(\nextmm(\stl),x)$ for all $\stl\in\stateset$ and all symbols $x$.
\end{itemize}

Reciprocally, to any map $\mapnext$ from $\stateLattice$ to $\{0,\ldots, \size{w}-1\}$ such that $\mapnext(\stl)\in\complementary{s}$ for all states $\stl\in\stateLattice$,  there corresponds the unique $w$-strategy $\strat=(\stateset, \init, \prematchset, \nextmm, \trans, \shift)$ for which the next-position-to-check function $\nextmm$ coincides with $\mapnext$ on $\stateset$.

Finally, our brute force algorithm 
\begin{enumerate}
\item takes as input a pattern $w$ and an iid model $\piid$,
\item computes the position lattice of $w$,
\item enumerates all the maps $\mapnext$ such that $\mapnext(\stl)\in\complementary{s}$ for all states $\stl\in\stateLattice$,
\item for each $\mapnext$, gets the corresponding $w$-strategy by keeping only the states of $\stateLattice$ reachable from $\emptyset$, with the next-position-to-check function $\mapnext$,
\item computes the asymptotic speed of all the $w$-strategies under $\piid$,
\item returns the $w$-strategy with the greatest speed.
\end{enumerate}

The time complexity of the brute force algorithm is 
\begin{dmath*}
O\left(\left(\prod_{k=1}^{\size{w}-1} (\size{w}-k)^{\binom{\size{w}}{k}}\right)2^{3\size{w}}\right),
\end{dmath*}
where the first factor stands for the number of functions $\mapnext$ and the second one for the computation of the asymptotic speed of a $w$-strategy, which needs to solve a linear system of size equal to the number of states, which is $O(2^{\size{w}})$. Its memory space complexity is  $\size{w}2^{\size{w}-1}$, i.e. what is needed to store the position lattice of $w$.

Under its current implementation, the brute force determination of the fastest $w$-strategy is unfeasible for patterns of length greater than $4$.

\section{A polynomial heuristic}\label{secHeuristic}
There are two points which make the complexity of the brute force algorithm given in Section {\ref{secBrute}} that high:
\begin{enumerate}
	\item the size of the position lattice, which is exponential with the length of the pattern,
	\item determining the fastest strategy in the position lattice, which needs a time exponential with its size. 
\end{enumerate}

Our heuristic is based on two independent stages, each one aiming to overcome one of these two points. Both of them start from the general idea that, since, for any current position of the text, the probability that no mismatch occurs until the $n^{\mbox{\tiny th}}$ text access decreases geometrically with $n$, the first relative positions accessed by a strategy (or more generally by a pattern algorithm) are those which have the greatest influence on its asymptotic speed.

\newcommand{\prefset}[1]{P(#1)}
\newcommand{\remaset}[1]{R(#1)}
\newcommand{\setsub}[0]{\set{U}}

\subsection{$n$-sets sublattices}

A sufficient condition for a sublattice $\setsub\subseteq\stateLattice$ to contain a $w$-strategy is that, for all $\stl\in\setsub$, there exists at least a position $i\in\complementary{\stl}$ with $\transLattice{\stl}(i,x)\in\setsub$ for all $x\in\alp$. A sublattice $\setsub$ verifying this condition will be said to be $\defi{complete}$. Figure \ref{figSublattice} displays four complete sublattices extracted from the position lattice of $abb$ (Figure \ref{figLattice}).

\begin{figure*}[!tpb]
\centering{\includegraphics[width=\textwidth, trim=0cm -0.0cm 0cm 0cm]{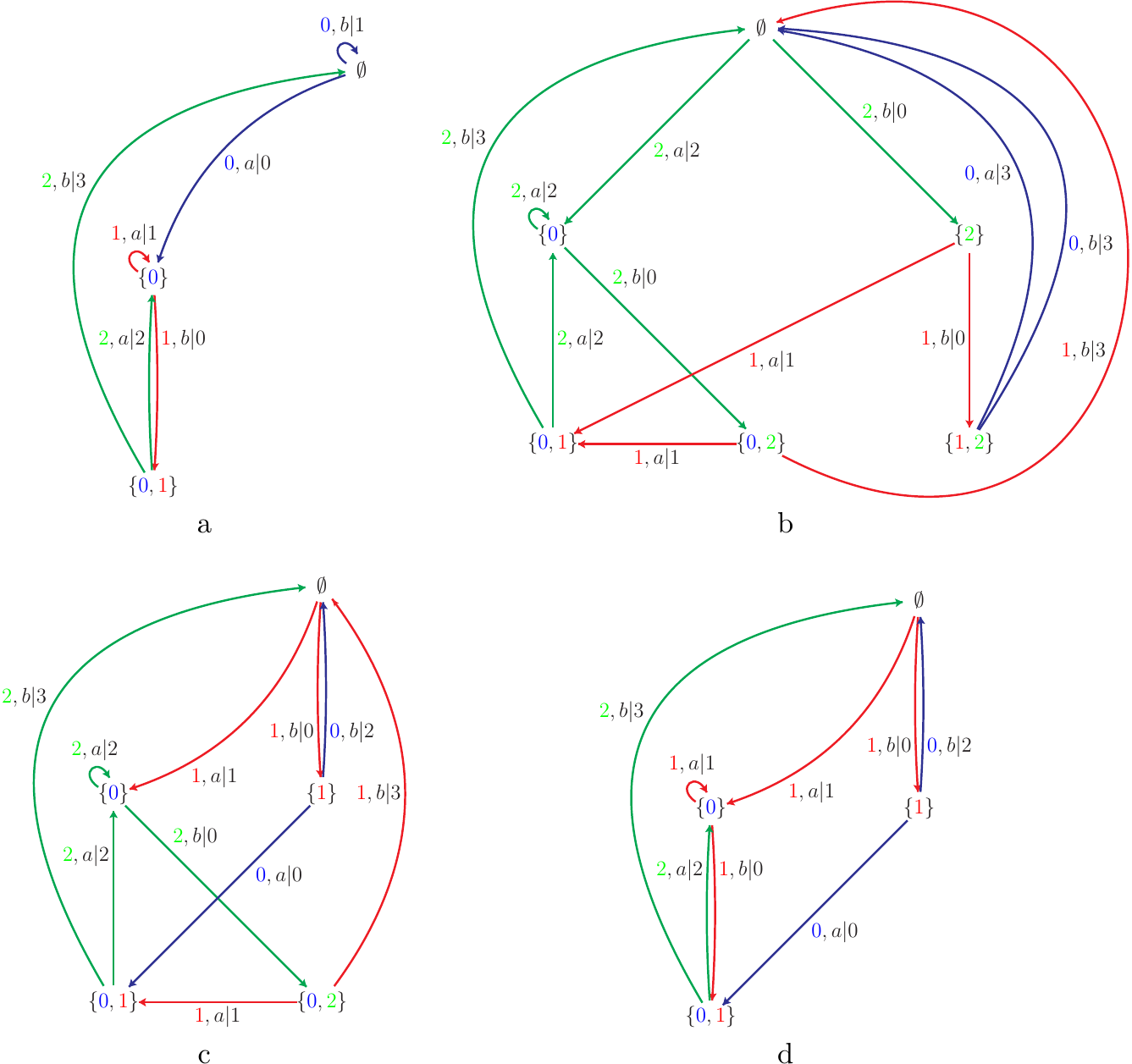}}
\caption{Four complete sublattices extracted from the position lattice of $abb$. Sublattice a (resp. b) leads to the strategy where the next-position-to-check is always the smallest (resp. the greatest) relative position unchecked.
Sublattice c (resp. d) leads to the strategy at the top (resp. at the bottom) of Figure \ref{figStrategies}.}
\label{figSublattice}
\end{figure*}

Let us introduce some additional notations here. For all sets $\set{S}$ of positions,  the \defi{prefix} of $\set{S}$ is defined as $\prefset{\set{S}} = \max\{i\in\set{S}\condi  j\in\set{S}\mbox{ for all } 0\leq j\leq i\}$ and its \defi{rest} is $\remaset{\set{S}} = \setMinus{\set{S}}{\{0, \ldots, \prefset{\set{S}}\}}$

For all positive integers $n$, the $n$-sets sublattice of $w$ is the sublattice $\setsub$ of $\stateLattice$ which contains all and only the subsets of $\stateLattice$ with a rest containing less than $n$ positions, i.e. the subsets of the form $\{0,\ldots,p\}\cup \set{X}$ with $p<\size{w}-1$ and $\card{\set{X}}\leq n$. 

By construction, the $n$-sets sublattice of $w$ is complete. It contains $O(\size{w}^{n})$ states and $O(\size{w}^{n+1})$ transitions.

We adapted Algorithm \ref{algoLatt} to compute the $n$-sets sublattice of $w$ in $O(\size{w}^{n+1})$ time with the same amount of memory space.

\subsection{$\ell$-shift expectation}
We are now interested in a fast way for finding an efficient $w$-strategy in a given complete sublattice.

\newcommand{\setTrans}[2][\setsub]{\mathrm{Tr}(#2)}

\newcommand{\ksexp}[3][w]{\mathrm{ES}_{#2}^{[#1]}[#3]}
For all integers $\ell$ and all states $\stl$ of a sublattice $\setsub$, the $\ell$-shift expectation of $\stl$ is defined as the greatest shift expectation one could possibly get in $\ell$ steps in $\setsub$ by starting from $\stl$, conditioned on starting from $\stl$, while parsing a text following an iid model $\piid$. Namely, the $\ell$-shift expectation is computed following the recursive formula:
\begin{itemize}
	\item $\ksexp{0}{\stl} = 0$,
	\item for all $\ell>0$, $$\ksexp{\ell}{\stl} = \max_{i\in\setTrans{\stl}}\sum_{x\in\alp}\piid(x)\left(\shiftLattice{\stl}(i,x)+\ksexp{\ell-1}{\transLattice{\stl}(i,x)}\right)$$
\end{itemize}
where $\setTrans{\stl} = \{i\in\complementary{\stl}\condi \transLattice{\stl}(i,x)\in\setsub \mbox{ for all }x\in\alp\}$.

The $\ell$-shift expectation of a complete sublattice $\setsub$ is well defined and can be computed in $O(\ell T)$ time, where $T$ is the number of transitions of the sublattice $\setsub$ and by using $O(\card{\setsub})$ memory space.

We finally extract a $w$-strategy from $\setsub$ by setting the next-position-to-check of all states $\stl\in\setsub$ to
\begin{dmath*}
\argmax_{i\in\setTrans{\stl}}\sum_{x\in\alp}\left(\shiftLattice{\stl}(i,x)+\ksexp{\ell-1}{\transLattice{\stl}(i,x)}\right).
\end{dmath*}

\subsection{$K$-Heuristic}

The $K$-Heuristic combines the two approaches above in order to compute a $w$-strategy in a time polynomial with the length of the pattern.  

Being given an order $K\geq 1$, we start by computing the $K$-sets sublattice of $w$, thus in $O(\size{w}^{K+1})$ time. In order to select a $w$-strategy from the $K$-sets sublattice, we next compute the $(K+1)$-shift expectation of all its states and extract a $w$-strategy as described just above. This computation is performed in $O(K \size{w}^{K+1})$ time, since the number of transition of the sublattice is $O(\size{w}^{K+1})$, by using $O(\size{w}^{K})$ memory space.

Let us remark that the order $\ell$ of the $\ell$-shift expectation does not have, \textit{a priori}, to be strongly related to the order $K$ of the $K$-sets sublattice on which it is computed. By experimenting various situations, we observed that considering an order greater than $K+1$ generally does not improve much the performances, whereas the strategies obtained from $\ell$-expectations with $\ell$ smaller than $K$ may be significantly slower.

The $K$-Heuristic returns a $w$-strategy in $O(K \size{w}^{K+1})$ time by using $O(\size{w}^{K})$ memory space. We insist on the fact that the $K$-Heuristic generally does not return the fastest strategy, even if $K>\size{w}$. However, we will see in the next section that it performs quite well in practice.

\section{Evaluation}\label{secEvaluation}
We shall compare the approaches introduced in Sections \ref{secBrute} and \ref{secHeuristic} with selected pattern matching algorithms. The comparison is performed, first, from a theoretical point of view, by computing their asymptotic speeds under iid models, and second, in practical situations, by measuring their average speed over real data. The \defi{average speed} with regard to a pattern $w$, of an algorithm or a matching machine on a text $t$ is the ratio of $\size{t}$ to the number of text accesses performed by the algorithm to search $w$ in $t$.

We are also interested in to what extent taking into account the frequencies of the letters of an iid model or a text, for determining the Fastest and the $K$-Heuristic strategies, actually improves their asymptotic or their average speeds. To this purpose, we compute the  Fastest and the $K$-Heuristic strategies from the uniform iid model. Next, we test their efficiency in terms of asymptotic speed under a non-uniform iid model and in terms of average speeds on data with non-uniform frequencies of letters.

\subsection{Pre-existing pattern matching algorithms}\label{secAlgoStd}
More than forty years of research have already led to the development of dozens algorithms. We selected the $\nAlgos$ ones below for our evaluation:
\begin{enumerate}
\item Naive \cite{Charras2004},
\item Morris-Pratt \cite{Charras2004},
\item Knuth-Morris-Pratt \cite{Knuth1977},
\item Quicksearch \cite{Sunday1990},
\item Boyer-Moore-Horspool \cite{Horspool1980},
\item TVSBS \cite{Thathoo2006}, a``right-to-left'' algorithm in which shifts are given by a bad-character rule  \cite{Boyer1977,Sunday1990} taking into account the two letters at distances $\size{w}-1$ and $\size{w}$ from the current position of the text,
\item EBOM \cite{Faro2009}, a version of the Backward Oracle Matching algorithm \cite{Allauzen2001} which also uses a ``bad two-characters'' rule,
\item HASHq \cite{Wu1994}, which implements the Boyer-Moore algorithm on blocks of length $q$ by using efficient hashing techniques \cite{Karp1987} (our tests are performed with $q=3$),
\item FJS \cite{Franek2005}, which combines the ideas of Knuth-Morris-Pratt \cite{Knuth1977} and Sunday \cite{Sunday1990} algorithms.
\end{enumerate}

Algorithms 1 to 5 are classics. The last four ones were chosen for being known to be efficient on short patterns and small alphabets \cite{Faro2013}, a situation in which the determination of the fastest strategy is feasible.

Let us remark that the order of the $w$-matching machine associated to TVSBS is equal to $\size{w}$, thus greater than that of the Fastest strategy that we compute.

The transformation into matching machines was implemented for a few other pattern matching approaches, for instance the SA  algorithm (the Baeza-Yates-Gonnet algorithm) based on bitwise operations \cite{baeza1992new}, or the string-matching automaton  \cite{Cormen1990}. Since the asymptotic and average speeds of these two algorithms are exactly $1$, whatever the pattern, the model and the text, there is no point in displaying them.

\subsection{Results}
We shall evaluate:
\begin{itemize}
\item the pre-existing pattern matching algorithms presented in Section \ref{secAlgoStd},
\item the $1$- $2$- and $3$-Heuristics and
\item the Fastest strategy (each time it is possible).
\end{itemize}

\subsubsection{Asymptotic speed}
The asymptotic speeds are computed for texts and patterns on the binary alphabet $\{\mathtt{a},\mathtt{b}\}$.

\begin{table*}[!tpb]
\centering{
\includegraphics[width=\textwidth]{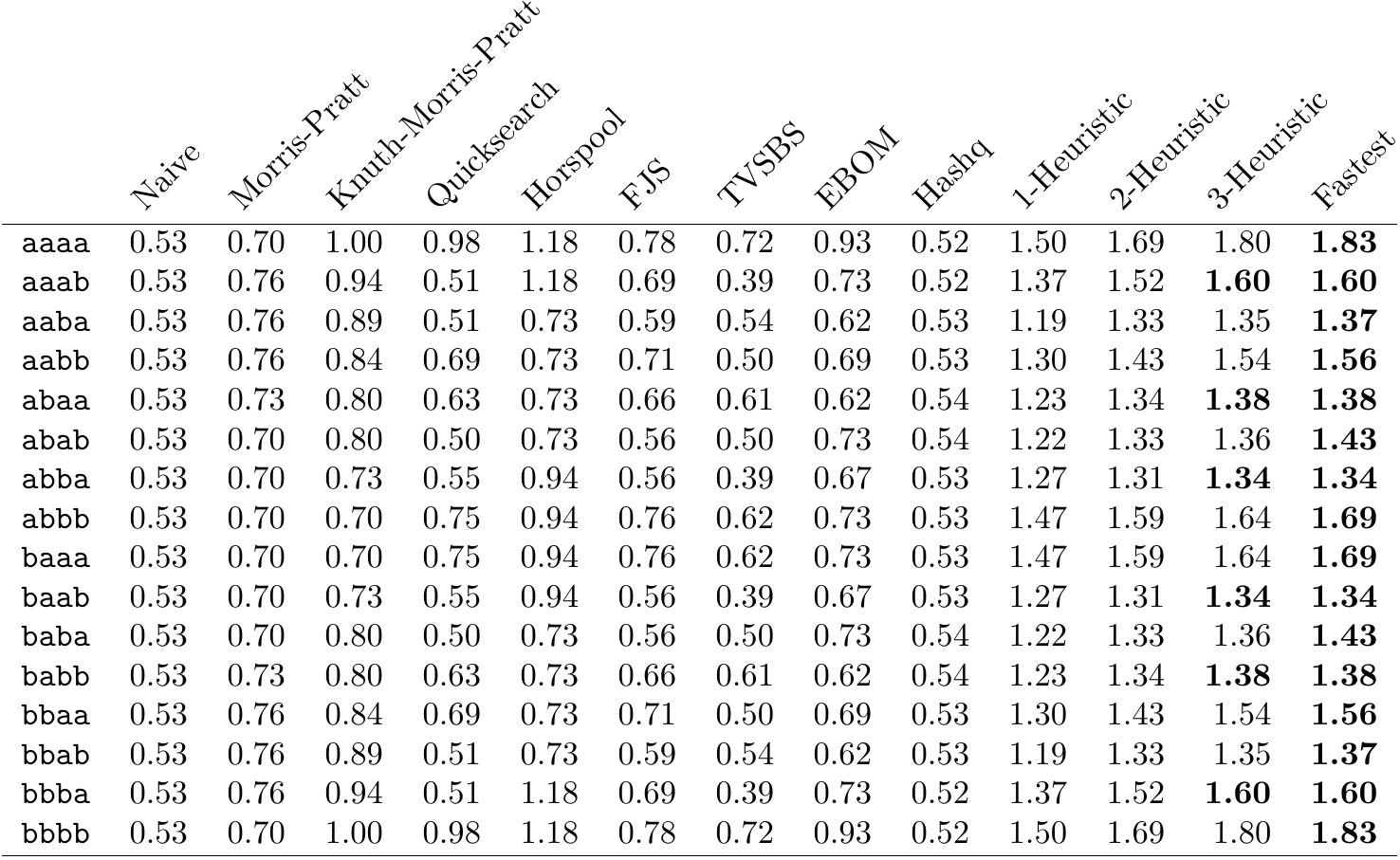}
}
\caption{Asymptotic speeds for the patterns of length $4$ on $\{\mathtt{a},\mathtt{b}\}$ under the uniform model.}
\label{tabTheo5}
\end{table*}
Table \ref{tabTheo5} displays the asymptotic speeds for all the patterns of length $4$ on iid texts drawn from the uniform distribution.
As expected, the strategy computed with the brute force algorithm (last column) is actually the fastest, but the speeds of the  $1$-,$2$- and $3$-Heuristics are very close. The pre-existing algorithms are outperformed by all our approaches (even by the $1$-Heuristic) for all the patterns. 
We observe that the Naive, Morris-Pratt and Knuth-Morris-Pratt algorithms have asymptotic speeds always smaller than $1$. One cannot expect them to be faster since, by construction, they access all the positions of a text at least once. In the following, we will not display their speeds, nor that of Quicksearch, for they are always smaller than at least one of the other pre-existing algorithms. 
The full tables can easily be re-computed by using our software.

\begin{table*}[!tpb]
\centering{
\includegraphics[width=\textwidth]{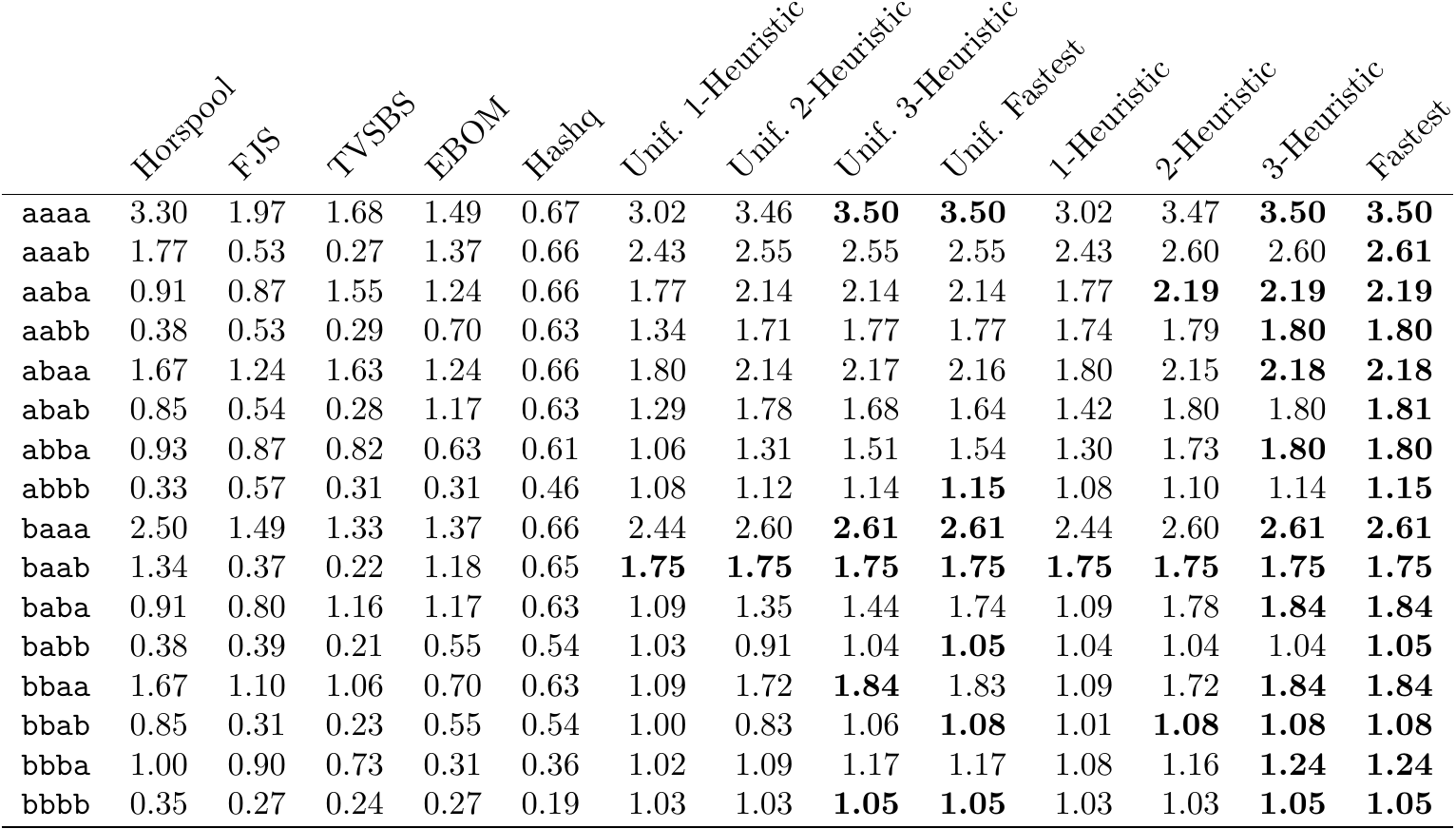}
}
\caption{Asymptotic speeds for the patterns of length $4$ on $\{\mathtt{a},\mathtt{b}\}$ under the iid model $(\piid_{\mathtt{a}},\piid_{\mathtt{b}}) = (0.1, 0.9)$.}
\label{tabTheo1}
\end{table*}

Table \ref{tabTheo1} displays the asymptotic speeds with regard to the same patterns as Table \ref{tabTheo5}, but under the iid model $(\piid_{\mathtt{a}},\piid_{\mathtt{b}}) = (0.1, 0.9)$. This table shows the asymptotic speeds of the $K$-Heuristics and the Fastest strategies computed with regard to an uniform iid model (the columns starting with ``Unif.''). The strategies such obtained are not optimized according to the letter probabilities of the model. They may be used as general purpose approaches, while the strategies obtained from the model probabilities will be called \defi{adapted} below.
Overall, our methods are faster than the pre-existing algorithms, with a few exceptions: Horspool is faster than the $1$-Heuristic for two patterns ending with the rare letter $\mathtt{a}$: $\mathtt{aaaa}$ and $\mathtt{baaa}$. And EBOM is faster than the $1$-Heuristic for searching $\mathtt{baba}$.
The $K$-Heuristics and the Fastest strategies computed with regard to an uniform iid model have asymptotic speeds smaller than their counterparts obtained from the actual probabilities of the text model (here highly unbalanced). Nevertheless, the uniform approaches still perform quite well, notably better than the pre-existing algorithms, except for the uniform $1$-Heuristic and the same patterns as above.

\begin{table*}[!tpb]
\centering{
\includegraphics[width=\textwidth]{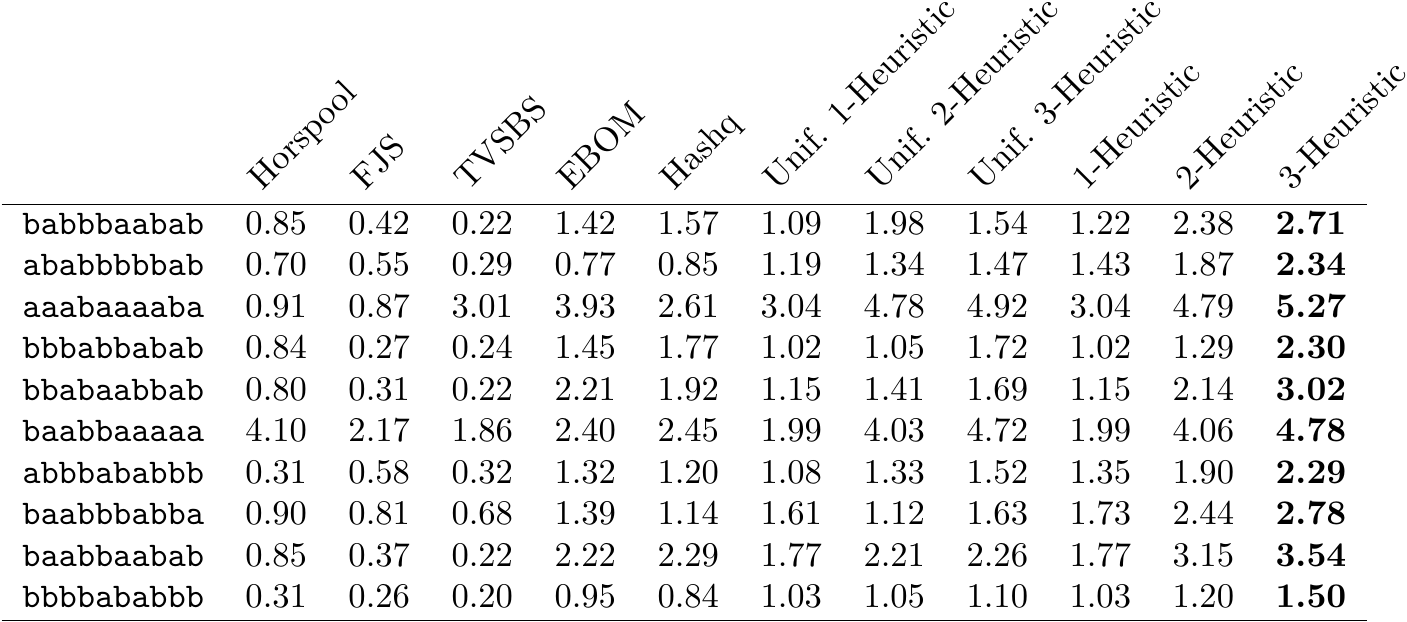}
}
\caption{Asymptotic speeds for some patterns of length $10$ on $\{\mathtt{a},\mathtt{b}\}$ (drawn from the uniform distribution) under the iid model $(\piid_{\mathtt{a}},\piid_{\mathtt{b}}) = (0.1, 0.9)$.}
\label{tabTheo10}
\end{table*}

Considering longer patterns leads to similar observations. Table \ref{tabTheo10} shows the asymptotic speeds obtained for random patterns of length $10$. The $3$-Heuristic outperforms all the others approaches (the Fastest strategy cannot be computed for this length). The (uniform) $1$-Heuristic is slower than algorithms such EBOM or Hashq. But both the uniform $2$- and $3$-Heuristic  overall perform better than the pre-existing algorithms, though they are slightly slower for a few patterns.

\subsection{Average speed}
Our data benchmark consists in the \textit{Wigglesworthia glossinidia} genome, known for its bias in nucleotide composition ($78\%$ of $\{\mathtt{a,t}\}$), and the Bible in English from \cite{Faro2013}.
\begin{table*}[!tpb]
\centering{
\includegraphics[width=\textwidth]{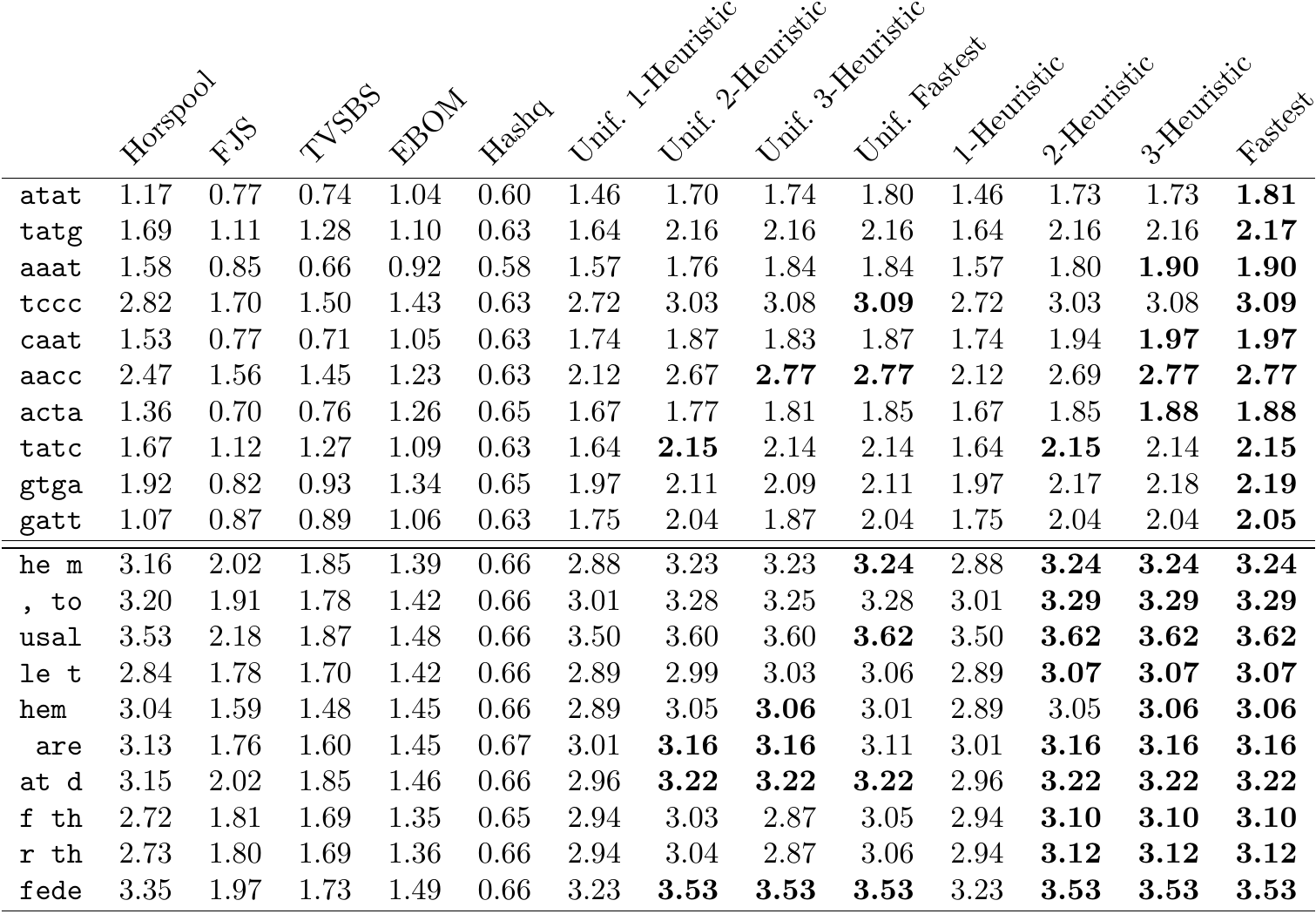}
}
\caption{Average speeds for some patterns of length $4$ picked from the benchmark data (the \textit{Wigglesworthia glossinidia} complete genome and the Bible in English).}
\label{tabPracti4}
\end{table*}

Table \ref{tabPracti4} displays the average speeds of patterns randomly picked from the data. Let us remark that we are now dealing with real texts, which are not iid. In particular, the Fastest strategy could possibly be outperformed (this is not observed on the benchmark data). The $2$- and $3$-Heuristics, uniform and adapted, are faster than the pre-existing algorithms for all the patterns, whereas the $1$-Heuristic is sometimes slightly outperformed by Horspool. Horspool is almost as fast as our approaches on the Bible while being sometimes significantly outperformed on the \textit{Wigglesworthia glossinidia} genome. The average speeds are overall greater on the Bible than on the DNA sequence. In both cases, we do not observe a wide performance gap between the uniform and the adapted approaches, though our benchmark data are far from following an uniform iid model.
Let us remark that the $2$- and $3$-Heuristics have almost the same performances both in the uniform and the adapted cases.
\begin{table*}[!tpb]
\centering{
\includegraphics[width=1.03\textwidth]{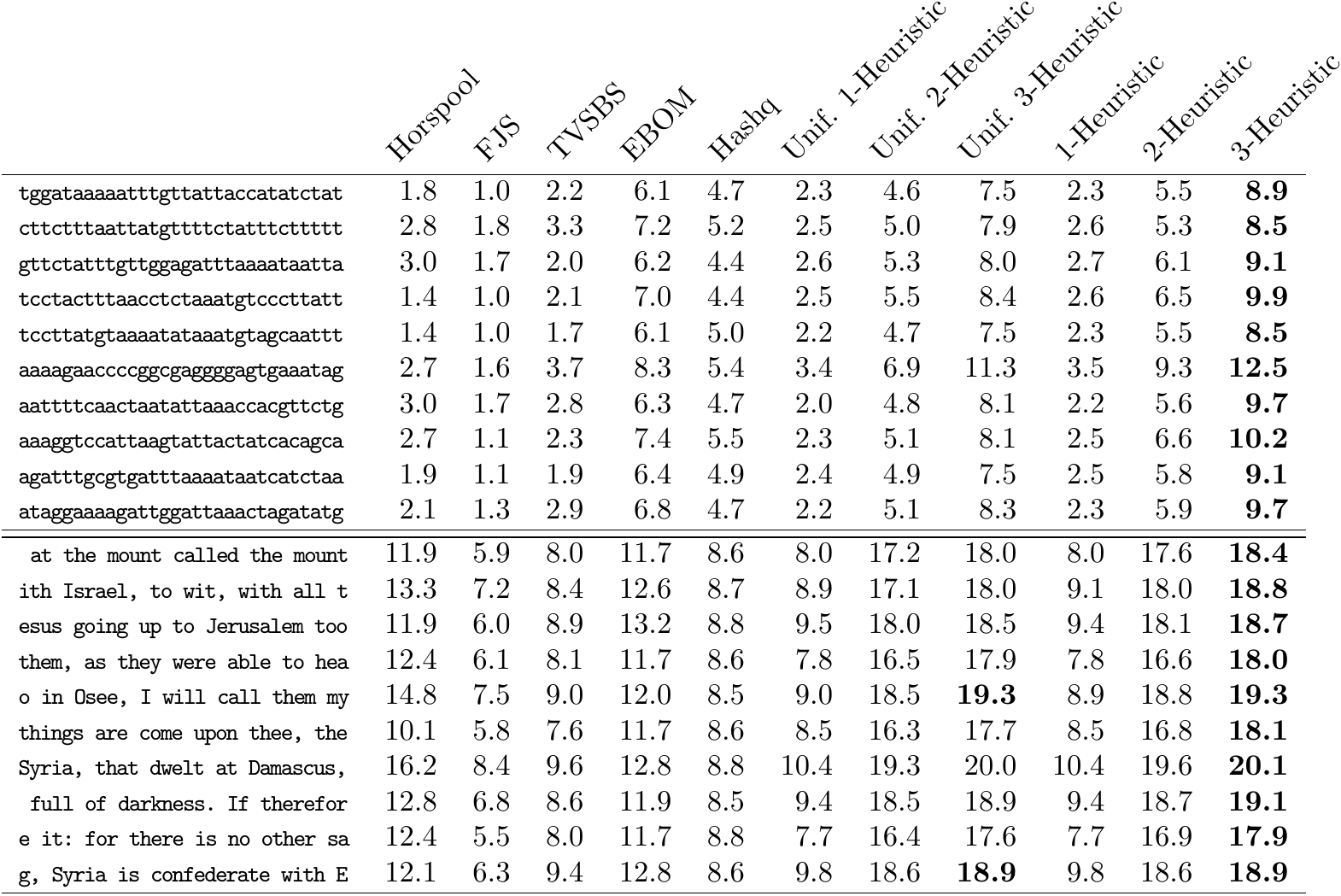}
}
\caption{Average speeds for some patterns of length $30$ picked from the benchmark data (the \textit{Wigglesworthia glossinidia} complete genome and the Bible in English).}
\label{tabPracti30}
\end{table*}
 
Table \ref{tabPracti30} shows the averages speeds with regard to patterns of length $30$. The average speeds on the Bible are about twice those on the \textit{Wigglesworthia glossinidia} genome. One actually expects the speed to be greater in average on texts with large alphabets, since the less likely the match between two symbols, the greater the shift expectation per iteration. Again the $3$-Heuristic, uniform or adapted, outperforms the pre-existing algorithms.
The speeds of the $3$-Heuristic and of the $2$-Heuristic differ in a greater amount than with patterns of length $4$ for the  \textit{Wigglesworthia glossinidia} genome, and, to a smaller extent, for the Bible.

\section{Discussion}
In practical situations and though they do not take into account the letter frequencies, the uniform $K$-Heuristics and the uniform Fastest strategy perform generally almost as well as their adapted counterparts. The greatest difference observed is for the patterns of length $30$ on the \textit{Wigglesworthia glossinidia} genome (Table \ref{tabPracti30}) and is relatively small.
We do observe a notable amount of difference for the quite extreme case of the asymptotic speed under the iid model $(\piid_{\mathtt{a}},\piid_{\mathtt{b}}) = (0.1, 0.9)$. But even for these frequencies, the uniform approaches show greater asymptotic speeds than any of the selected pre-existing algorithms.

The $3$-Heuristic has very good results whatever the pattern or the text. There is no situation for which the performances of the $2$-heuristic are far from the best. On the contrary, the performance ranking of the pre-existing algorithms depends heavily on the patterns and on the texts or the model. For instance, Horspool may perform very well, even almost optimally, for some patterns and texts or models while its speed may completely plummet in other situations.

The question of selecting the most efficient order of $K$-Heuristic still deserves further investigations. A basic answer could be ``the greater, the better'' but we should take into consideration that an higher order of heuristic comes with an increased computational cost.
After some experiments, we observed that the asymptotic speed the $K$-Heuristic tends to stop improving beyond a certain rank. For instance, the difference in average speed between the $2$- and $3$-Heuristics for patterns of length $4$, both on the genome and on the Bible, probably does not justify the computational cost of the $3$-Heuristic, while it is worth to use the $3$-Heuristic rather than the $2$-Heuristic for searching patterns of length $30$ in the Bible (not that much for the \textit{Wigglesworthia glossinidia} genome). The best trade-off for the order of the $K$-Heuristic depends on the pattern (notably its  length) and on the text features (in particular the alphabet size and the letter frequencies). 

It is certainly possible to obtain efficient heuristic with a lower computational cost than for the $K$-Heuristic. Since in standard situation, the length of the text is much greater than that of the pattern, there is no real reason for considering only pattern matching algorithms with linear pre-processings of the pattern. In the extreme case where the texts are arbitrarily long with regard to the patterns, any pre-processing, i.e. whatever its computation time, would be beneficial as soon as it improves the overall speed. 
\section*{Authors' contributions}
Gilles Didier provided the initial idea, led the software development and wrote all the manuscript but the section \textit{Evaluation}. Laurent Tichit collaborated on the software development, ran the tests and wrote the section \textit{Evaluation}. Both authors read, edited and approved the final manuscript.

\end{document}